\newtheorem{thy}{Theorem}
\journal{European Journal of Operational Research }
\begin{document}

\begin{frontmatter}
\title{Online Bayesian prediction of remaining useful life for gamma degradation process under conjugate priors}

%\tnotetext[label0]{This is only an example}
\author[label1,label2]{Ancha Xu\corref{cor1}}
%\ead{mailto:xuancha2011@aliyun.com}

\address[label1]{Department of Statistics, Zhejiang Gongshang University, Zhejiang, China}
\address[label2]{Collaborative Innovation Center of Statistical Data Engineering, Technology \& Application\\ Zhejiang Gongshang University, Zhejiang, China}
\cortext[cor1]{Corresponding author}
\ead{xuancha2011@aliyun.com}

\begin{abstract}
Gamma process has been extensively used to model monotone degradation data. Statistical
inference for the gamma process is difficult due to the complex parameter structure involved in the likelihood function.
In this paper, we derive a conjugate prior for the homogeneous gamma process, and some properties of the prior distribution
are explored.
Three algorithms (Gibbs sampling, discrete grid sampling, and sampling importance resampling) are well designed to generate posterior samples of the model parameters, which can greatly lessen the challenge of posterior inference. Simulation studies show that the proposed
algorithms have high computational efficiency and estimation precision. The
conjugate prior is then extended to the case of the gamma process with heterogeneous effects.
With this conjugate structure,
the posterior distribution of the parameters can be updated recursively, and
an efficient online algorithm is  developed to predict remaining useful life of multiple systems.  The effectiveness of the proposed
online algorithm is illustrated by two real cases.

\end{abstract}

\begin{keyword}
%% keywords here, in the form: keyword \sep keyword
Reliability \sep Heterogeneity \sep Gibbs sampling \sep Sampling importance resampling \sep Remaining useful life.

\end{keyword}	
\end{frontmatter}

\newpage
\section{Introduction}\label{sec_intro}

Modern systems are often designed with high-quality standards, such as wind turbines \citep{xuleili2020}, bearings in high-speed trains \citep{silizha2019},  plasma display panels \citep{chapul2016}, lithium-ion batteries \citep{xulichen2016}, etc. In the constraint time, it is difficult to get failure information of these systems from life testing, which poses a significant challenge to manufacturing firms. Fortunately, the ageing failures of these systems are usually attributed to some underlying performance characteristics (PCs), for instance, crack size of  the bearing, lumen output of the light-emitting diode, a lithium-ion battery's capacity, etc. Degradation of PC  accumulates over time and eventually reaches a predetermined threshold.
The first hitting time to the threshold can be viewed as the system's lifetime.
The link between degradation and system failure provides
a promising way to assess the reliability of highly reliable systems,
as it is possible to estimate the failure time distribution  through a certain degradation-based model.
Using the same experimentation time,
degradation tests  have been demonstrated to provide more life information
than traditional life testing. As a result,  analysis of degradation data is expected to estimate the system's
lifetime distribution more accurately, which has also been demonstrated as  an effective way for reliability assessment.

The existing degradation models are mainly composed of two categories: stochastic process models and
general path models. The distinctions between
the two types of models have been well addressed  by  \cite{yexie2015}.
In real-world applications, stochastic process models are more widely utilized because of  their mathematical properties and  physical explanations. As a special
stochastic degradation model, the gamma process can be interpreted as
the limit of a compound Poisson process
with the jump size following a specific distribution, and  is often adopted when the PC deterioration is strictly monotone. The gamma process as a class of degradation models was first introduced by \cite{sing1995}.
Several extensions of the gamma process that take into account covariates, heterogeneous effects,
measurement errors, and multistage degradation have been well studied over the last two decades. For example,
\cite{bagnik2001} modelled the gamma process with covariates by using the method of additive accumulation of damages. \cite{parpad2006}  proposed an accelerated gamma degradation
model with the assumption that the shape parameter is a function of covariates. \cite{lintsubal2015} presented an accelerated gamma degradation model with
bounded constraint.  When there was unit-to-unit variation,
\cite{lawcro2004} considered the gamma process with heterogeneous effects, where the scale
parameter was assumed to be a random variable with gamma distribution.
\cite{wang2008} proposed a pseudo-likelihood method to estimate the parameters under
non-homogeneous gamma process model with random effects.
\cite{wanwanhon2022} developed a generalized inference method
for the gamma process with random effects, which can generate accurate interval estimates for
the model parameters. 
When 
the degradation process is imperfectly inspected,
the measurement errors  are non-ignorable, and Gaussian  distributed noise can be included. Then independence among the  degradation increments
does not hold, which makes the parameter estimation intractable.  \cite{hazpanman2020} proposed
approximate Bayesian computation method to handle this problem, and
\cite{espmelcas2022}  combined particle filter and an expectation-maximization algorithm
to obtain the parameter estimation. For some special systems, due to physical or chemical changes, the degradation path of PC may
exhibit two phases,  for example,  the luminosity of organic light-emitting diode \citep{wantanbae2018},
the capacity of
lithium-ion batteries \citep{linlincab2021}.  \cite{linngtsu2019} considered
two-phase degradation models under the gamma process, as well as
Bayesian and likelihood methods for estimating the model parameters.
In addition to being a model of degradation, the gamma process also serves as  a powerful model
in other fields, such as statistical process control \citep{hsupeawu2008,chenye2018}, maintenance \citep{liupanwan2021},  sports science \citep{sonshi2020},  etc.

Another goal of modeling degradation data is to predict the remaining useful life (RUL) of the system.
With the development of sensor technology, the degradation of PC can be monitored in real-time, and the RUL is predictable regularly. 
The predicted RUL can then be timely used
to support  condition-based maintenance.  In the case of degradation-based online RUL prediction, the degradation models and statistical inference methods of parameter estimation are the two key components. For the Wiener-based degradation models,  the Kalman filter or methods based on the Kalman filter are often adopted
to predict RUL online. These methods are capable of  achieving closed-form online RUL prediction with no requirement on historical data storage for linear degradation models \citep{siwanghu2013,wangtsui2018,zhangsihu2018}. However, the implementation
of the Kalman filter and its related methods is founded on the Gaussian distribution, which
restricts their applications. 
For gamma degradation models, \cite{paroissin2017} and \cite{xushen2018} developed
recursive linear estimators of the mean and variance of the gamma process, while 
the RUL prediction as well as its interval estimation can not be obtained by the same techniques. 
The current offline methods, such as Bayesian and likelihood-based techniques \citep{wang2008, linngtsu2019, wanwanhon2022}, are based on the entire set of data. When new observations are available, 
statistical analysis  needs to be re-conducted for the updated dataset. As the sample size grows,  
data storage and analysis  based on  these methods will become challenging.
Regarding this, an efficient method with low computational requirements is necessary for
online RUL prediction under gamma process. A promising solution to this problem is
using conjugate priors for the gamma process. By the nice properties of conjugate priors,
recursive Bayesian analysis is possible, and the online RUL prediction can also be realized efficiently.
The problem arises from the 
fact that the conjugate prior distribution is complicated.
This makes dealing with posterior inference
difficult. In light of this, we develop three algorithms to simulate random numbers from the
posterior distribution, which greatly reduces the computational burden of posterior inference.  
We then propose an online RUL prediction algorithm that exploits the advantages of conjugate priors and maintains the tractability of the closed-form update. Thus, it guarantees fast online RUL prediction of multiple systems with minimal computational power requirements. 

The remainder of this paper is organized as follows. In Section \ref{sec:prior}, we propose a class of prior distributions for the
gamma process, and investigate some properties of the priors. Three algorithms are presented to generate posterior samples based on conjugate priors in Section \ref{sec:ps}. Simulation studies
are carried out to compare the three algorithms in terms of estimation accuracy and computational
efficiency in Section \ref{sec:sim}. The conjugate priors are extended to the case of gamma process with heterogeneous effects in Section \ref{sec:heter}. 
An online RUL prediction algorithm based on conjugate priors is explored in Section \ref{sec:rul}. 
Section \ref{sec:case} demonstrates the online RUL prediction algorithm in two real cases.
Section \ref{sec:conc} concludes the paper.

\section{Conjugate prior}	
\label{sec:prior}	
If a stochastic process $\{\mathcal{Y}(t),t\ge0\}$ satisfies the following properties:
\begin{enumerate}
	\item[i)]  $\mathcal{Y}(0)=0$ with probability 1,
	\item[ii)] $\{\mathcal{Y}(t),t\ge0\}$ has stationary and  independent
	increments,
	\item[iii)] the increment $\Delta Y_t=\mathcal{Y}(t)-\mathcal{Y}(s)$ follows
	gamma distribution
	( $Ga(\alpha(t-s),\beta)$) with probability density function (PDF)
	$$f(y| \alpha, \beta)=\frac{\beta^{\alpha(t-s)} y^{\alpha(t-s)-1}}{\Gamma(\alpha(t-s))}
	\exp\left\{-\beta y \right\}, t>s,$$
	where $\Gamma(\cdot)$ denotes the gamma function, $\alpha$ and $\beta$ are positive parameters,
\end{enumerate}
then it is  called homogeneous gamma process, denoted by  $\{\mathcal{Y}(t),t\ge0\}\sim\mathcal{GP}(\alpha t,\beta)$.	

Gamma process is widely used to describe the deterioration path of some systems' PC.
Let $\mathbb{C}$ denote the threshold level of a system's PC. Then the lifetime of the system is
defined as  $\mathcal{T}=\inf\{t|\mathcal{Y}(t)\ge\mathbb{C}\}$. For gamma degradation process $\mathcal{GP}(\alpha t,\beta)$, the cumulative distribution function (CDF) of $\mathcal{T}$ is
\begin{equation}
	\begin{aligned}
		F_{\mathcal{T}}(t|\alpha,\beta)&=P(\mathcal{T}< t)=P(\mathcal{Y}(t)>\mathbb{C})=\frac{\Psi(\beta\mathbb{C},\alpha t)}{\Gamma(\alpha t)},
	\end{aligned}
\end{equation}
where $\Psi(k,\alpha)$ is the incomplete gamma function defined by $\Psi(k,\alpha)=\int_{k}^{\infty}x^{\alpha-1}\exp(-x)\text{d}x$.
Although $F_{\mathcal{T}}(t|\alpha,\beta)$ has an analytic form, the PDF of $\mathcal{T}$ is too complicated to
be applied in practice.
\cite{parpad2005} recommended a two-parameter  Birnbaum-Saunders distribution $BS(\alpha^\ast,\beta^\ast)$ with CDF $\Phi\left(\frac{1}{\alpha^\ast}\left[\sqrt{\frac{t}{\beta^\ast}}-\sqrt{\frac{\beta^\ast}{t}}\right]\right)$
to approximate $F_{\mathcal{T}}(t|\alpha,\beta)$, where $\alpha^\ast=\sqrt{\frac{1}{\beta\mathbb{C}}}$ and
$\beta^\ast=\frac{\beta \mathbb{C}}{\alpha}$, $\Phi(\cdot)$ is the
CDF of standard normal distribution. Therefore, the mean-time-to-failure (MTTF) of the system can be approximated by
$\beta^\ast\left(1+\frac{\left(\alpha^\ast\right)^2}{2}\right)=\frac{1+2\beta\mathbb{C}}{2\alpha}$.

Assume that the degradation path of system's PC follows gamma process $\mathcal{GP}(\alpha t,\beta)$. A total of
$n$ systems from population are randomly selected and  tested.
The measurement time epochs are $T_1<T_2<\cdots<T_m$, and the corresponding degradation value of the $i$-th system at time epoch $T_j$ is $Y_{ij}$, $i=1,\dots,n,$ $j=1,\dots,m$.
Let $y_{ij}=Y_{ij}-Y_{ij-1}$ and $t_j=T_j-T_{j-1}$, where $Y_{i0}=0$ and $T_0=0$, $i=1,\dots,n,$ $j=1,\dots,m$.
Denote the observed data as $\bm{y}=\{y_{ij}, i=1,\dots,n,~j=1,\dots,m\}$. According to the property (iii) of gamma process,   $y_{ij}\sim Ga(\alpha t_j,\beta)$. Then based on the data  $\bm{y}$, the likelihood function of $\alpha$ and $\beta$ is
\begin{equation}\label{like0}
	\begin{aligned}
		L( \bm{y}|\alpha,\beta)&=\prod_{i=1}^{n}\prod_{j=1}^{m}\dfrac{\beta^{\alpha t_j}}{\Gamma(\alpha t_j)}
		y_{ij}^{\alpha t_j-1}\exp\{-\beta y_{ij}\}\\
		&\propto\dfrac{\beta^{nT_m\alpha}\left[\prod_{i=1}^{n}\prod_{j=1}^{m}y_{ij}^{t_{j}}\right]^\alpha}
		{\left[\Gamma(\alpha t_j)\right]^n}\exp\left\{-\beta \sum_{i=1}^{n}\sum_{j=1}^{m}y_{ij}\right\}\\
		&\propto \dfrac{\beta^{nm\overline{T}_m\alpha}}{\left[\prod_{j=1}^{m}\left(\Gamma(\alpha t_j)\right)^{1/m}\right]^{mn}}
		\left[\prod_{i=1}^{n}\prod_{j=1}^{m}y_{ij}^{\frac{t_{j}}{nm\overline{T}_m}}\right]^{nm\overline{T}_m\alpha}\exp\{-mn\bar{y}_a\beta\},
	\end{aligned}
\end{equation}
where $\overline{T}_m=\frac{T_m}{m}$ and  $\bar{y}_a=\frac{1}{mn}\sum_{i=1}^{n}\sum_{j=1}^{m}y_{ij}$
is the arithmetic mean of increments.

\begin{thy}\label{th1}
	Based on likelihood function \eqref{like0}, a conjugate prior of  $\alpha$ and $\beta$ is
	\begin{equation}\label{con0}
		\pi(\alpha,\beta)=	C \cdot\dfrac{(\beta\omega)^{\delta \overline{T}_m\alpha}}{\left[\prod_{j=1}^{m}\left(\Gamma(\alpha t_j)\right)^{1/m}\right]^{\delta}}
		\exp\{-\delta\lambda\beta\},
	\end{equation}
	where $C$ is a normalized constant, $\delta$, $\omega$ and $\lambda$ are hyperparameters with nonnegative values, which describe kurtosis, shape and scale of the distribution, respectively.
\end{thy}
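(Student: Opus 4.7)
The plan is to verify conjugacy directly: multiply the prior $\pi(\alpha,\beta)$ in \eqref{con0} by the likelihood $L(\bm{y}|\alpha,\beta)$ in \eqref{like0}, regroup the factors, and show that the product, viewed as a function of $(\alpha,\beta)$, has exactly the functional form of \eqref{con0} but with updated hyperparameters $(\delta',\omega',\lambda')$. Since both expressions are written as products of four recognizable building blocks (a power of $\beta$ with exponent linear in $\alpha$, a power of $\prod_j \Gamma(\alpha t_j)^{1/m}$, a factor that is a power with exponent linear in $\alpha$, and an exponential in $\beta$), the matching can be done block by block.

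First I would combine the gamma-function denominators: the prior contributes $[\prod_j \Gamma(\alpha t_j)^{1/m}]^{\delta}$ and the likelihood contributes $[\prod_j \Gamma(\alpha t_j)^{1/m}]^{nm}$, so the posterior has exponent $\delta' = \delta + nm$. Next I would collect the exponential-in-$\beta$ factors to read off $\exp\{-(\delta\lambda + nm\bar{y}_a)\beta\}$, which, to match the form $\exp\{-\delta'\lambda'\beta\}$, forces
\begin{equation*}
\lambda' = \frac{\delta\lambda + nm\bar{y}_a}{\delta + nm}.
\end{equation*}
The powers of $\beta$ in the prior and likelihood add to $(\delta + nm)\bar{T}_m\alpha = \delta'\bar{T}_m\alpha$, which is already the correct exponent for the updated form.

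The main (and only slightly delicate) step is fusing the prior factor $\omega^{\delta\bar{T}_m\alpha}$ with the likelihood factor $\bigl[\prod_{i,j} y_{ij}^{t_j/(nm\bar{T}_m)}\bigr]^{nm\bar{T}_m\alpha} = \prod_{i,j} y_{ij}^{t_j\alpha}$ into a single expression of the form $(\omega')^{\delta'\bar{T}_m\alpha}$. Because both sides are exponentials linear in $\alpha$, I would take logarithms, equate the coefficients of $\alpha$, and solve for $\omega'$, obtaining
\begin{equation*}
\omega' = \exp\!\left(\frac{\delta\bar{T}_m\log\omega + \sum_{i=1}^n\sum_{j=1}^m t_j \log y_{ij}}{(\delta + nm)\bar{T}_m}\right),
\end{equation*}
which can be recognized as a weighted geometric-mean-type combination of $\omega$ and the observed increments.

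Assembling these three matched updates, the product $\pi(\alpha,\beta)L(\bm{y}|\alpha,\beta)$ is proportional to $\pi(\alpha,\beta;\delta',\omega',\lambda')$, establishing that \eqref{con0} is closed under Bayesian updating and therefore conjugate. The genuinely routine calculations (the gamma-factor exponent and the $\lambda$-update) I would do quickly; the one worth writing carefully is the $\omega$-update, since that is where the linear-in-$\alpha$ structure of both expressions must be exploited and where a reader is most likely to get lost. I do not foresee a real obstacle, only bookkeeping; if one wanted to be thorough, a short remark verifying that $C<\infty$ for admissible hyperparameters (integrability in $\beta$ comes from $\delta\lambda>0$, and integrability in $\alpha$ from the Stirling-driven dominance of $[\prod_j\Gamma(\alpha t_j)^{1/m}]^{\delta}$ over $(\beta\omega)^{\delta\bar{T}_m\alpha}$ as $\alpha\to\infty$) could be appended, but this is not required by the statement itself.
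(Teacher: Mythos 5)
Your proposal is correct and follows essentially the same route as the paper's own proof: multiply prior by likelihood and identify the updated hyperparameters, and indeed your $\delta'=\delta+nm$, $\lambda'=\frac{\delta\lambda+nm\bar{y}_a}{\delta+nm}$, and $\omega'$ (after exponentiating your log-form) coincide exactly with the paper's $\delta_p$, $\lambda_p$, and $\omega_p$. No gaps; the optional remark on propriety of $C$ is likewise consistent with the paper's later discussion of the condition $\omega<\lambda$.
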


\begin{proof}[Proof:]
	Based on the likelihood function \eqref{like0} and the prior \eqref{con0}, the joint posterior
	density of $\alpha$ and $\beta$ is
	\begin{equation}\label{post0}
		\begin{aligned}
			\pi(\alpha,\beta|\bm{y})&\propto 	L( \bm{y}|\alpha,\beta)\pi(\alpha,\beta)\\
			& \propto \dfrac{\beta^{(nm+\delta)\overline{T}_m\alpha}\omega^{\delta \overline{T}_m\alpha}}{\left[\prod_{j=1}^{m}\left(\Gamma(\alpha t_j)\right)^{1/m}\right]^{mn+\delta}}
			\left[\prod_{i=1}^{n}\prod_{j=1}^{m}y_{ij}^{\frac{t_{j}}{nm\overline{T}_m}}\right]^{nm\overline{T}_m\alpha}\exp\{-(mn\bar{y}_a+\delta\lambda)\beta\}\\
			&\propto \dfrac{(\beta\omega_p)^{\delta_p\overline{T}_m\alpha}}{\left[\prod_{j=1}^{m}\left(\Gamma(\alpha t_j)\right)^{1/m}\right]^{\delta_p}}
			\exp\{-\delta_p\lambda_p\beta\},
		\end{aligned}
	\end{equation}
	where $\delta_p=mn+\delta$, $\omega_p=\omega^{\frac{\delta}{mn+\delta}}\left[\prod_{i=1}^{n}\prod_{j=1}^{m}y_{ij}^{\frac{t_{j}}{nm\overline{T}_m}}\right]^{\frac{mn}{mn+\delta}}$ and $\lambda_p=\frac{mn}{mn+\delta}\bar{y}_a+\frac{\delta}{mn+\delta}\lambda$. Thus, $\pi(\alpha,\beta)$
	and $\pi(\alpha,\beta|\bm{y})$ are from the same distribution family.
\end{proof}
The conjugate prior $\pi(\alpha,\beta)$ depends on measurement time epochs,
and the form of $\pi(\alpha,\beta)$ seems to be complicated. However, it will be
beneficial to take another look at $\pi(\alpha,\beta)$:
\begin{equation}\label{deco0}
	\begin{aligned}
		\pi(\alpha,\beta)&=\pi(\beta|\alpha)\pi(\alpha)
		\propto
		\dfrac{(\delta\lambda)^{\delta \overline{T}_m\alpha+1}}{\Gamma\left(1+\delta \overline{T}_m\alpha\right)}
		\beta^{\delta \overline{T}_m\alpha}\exp\{-\delta\lambda\beta\}\cdot
		\dfrac{\left(\dfrac{\omega}{\delta\lambda}\right)^{\delta \overline{T}_m\alpha}\Gamma\left(1+\delta \overline{T}_m\alpha\right)}{\left[\prod_{j=1}^{m}\left(\Gamma(\alpha t_j)\right)^{1/m}\right]^{\delta}}.
	\end{aligned}
\end{equation}
Given $\alpha$, the conditional prior $\pi(\beta|\alpha)$ is gamma distribution $Ga(\delta \overline{T}_m\alpha+1,\delta\lambda)$.
Thus, the mode and variance of $\pi(\beta|\alpha)$ are $\overline{T}_m\alpha/\lambda$ and $(\delta \overline{T}_m\alpha+1)/(\delta\lambda)^2$, respectively. The hyperparameter
$\lambda$ is a standard scale parameter, while
the hyperparameter $\delta$ affects only  the variance rather than the mode of the conditional prior when $\alpha$ is given.
The curve of $\pi(\beta|\alpha)$ is concentrated around the mode for large values of $\delta$. In other words,
$\delta$ displays the kurtosis of $\pi(\beta|\alpha)$. We call $\delta$ the kurtosis parameter. The marginal prior
of $\alpha$ is proportional to
\begin{equation*}
	h(\alpha)=	\dfrac{\left(\dfrac{\omega}{\delta\lambda}\right)^{\delta \overline{T}_m\alpha}\Gamma\left(1+\delta \overline{T}_m\alpha\right)}{\left[\prod_{j=1}^{m}\left(\Gamma(\alpha t_j)\right)^{1/m}\right]^{\delta}}.
\end{equation*}
Using Stirling's formula and as $\alpha\rightarrow\infty$,
\begin{equation}\label{appro0}
	h(\alpha)\equiv O\left(\alpha^{(\delta+1)/2}\exp\left\{-\alpha\delta \overline{T}_m
	\left[\log\left(\frac{\lambda}{\omega}\right)+\log\left(\frac{\prod_{j=1}^{m}t_j^{t_j/T_m}}{\overline{T}_m}\right)\right]\right\}\right),
\end{equation}
where $h(\alpha)=O(g(\alpha))$ represents that $h(\alpha)$ and $g(\alpha)$ have the same order. It can be shown that $\log\left(\frac{\prod_{j=1}^{m}t_j^{t_j/T_m}}{\overline{T}_m}\right)\ge 0$ (See the proof in \ref{ap2}).
Thus, to guarantee that $\pi(\alpha)$ is a proper PDF, the condition of $\omega<\lambda$ should be ensured when determining the conjugate prior $\pi(\alpha,\beta)$. According to \eqref{appro0}, we know that the tail of $\pi(\alpha)$
behaves similar to that of gamma distribution $Ga\left(\frac{\delta+3}{2},\delta \overline{T}_m
\left[\log\left(\frac{\lambda}{\omega}\right)+\log\left(\frac{\prod_{j=1}^{m}t_j^{t_j/T_m}}{\overline{T}_m}\right)\right]\right)$.  $\omega$ behaves as a scale parameter in the $\pi(\alpha)$, which further affects the shape of $\pi(\beta|\alpha)$.  Thus, $\omega$ is called the shape parameter. Because of the gamma conditional prior $\pi(\beta|\alpha)$ and  tail property of $\pi(\alpha)$, the conjugate prior $\pi(\alpha,\beta)$ is referred to as approximated-gamma-gamma (AGG) distribution, denoted as $AGG(\delta,\omega,\lambda)$.

Figure \ref{config0} shows the function graphs and contours
of $\pi(\alpha,\beta)$ with various values of $(\delta,\omega,\lambda)$ when $t_j=1$, $j=1,\dots,m$. The top two subfigures in Figure \ref{config0} are the function graph and contour of $\pi(\alpha,\beta)$ with $\delta=2$, $\omega=0.5$, and $\lambda=1.5$, which is set as a benchmark. As can be seen in Figure \ref{config0}, when the value of $\delta$ is increased  to 5 and the other two hyperparameters are fixed,   the position of the mode is nearly identical, however, the contour is more concentrated around the mode. Increasing the value of $\lambda$ has similar
phenomena, while the mode is altered. A larger $\omega$ will increase the divergence of $\pi(\alpha,\beta)$
and also change the position of its mode. Figure \ref{config0} demonstrates
the influence of $\delta$, $\omega$ and $\lambda$ on the shape of the AGG distribution, which serves as a guide for selecting hyperparameter values based on beliefs of prior information.

{\bf Remark 1:} When the measurements are equally spaced, namely, the lag between two measurement time epochs $t_j= l$,
$\pi(\alpha,\beta)$ has a much simpler form:
\begin{equation}\label{con1}
	\pi(\alpha,\beta)=C \dfrac{(\beta\omega)^{\delta l\alpha}}{\left[\Gamma(l\alpha)\right]^{\delta}}
	\exp\{-\delta\lambda\beta\}.
\end{equation}
While $l=1$, $\pi(\alpha,\beta)$ is reduced to be a conjugate prior for gamma distribution $Ga(\alpha,\beta)$ \citep{dams1975}.

{\bf Remark 2:} The values of hyperparameters can be established based on the amount of prior information. As shown in Figure
\ref{config0}, large $\delta$, small $\omega$, or large $\lambda$ will lead to the small variance of $(\alpha,\beta)$,
which corresponds to the case of strong prior information. In the case of little prior knowledge, one may choose
a small $\delta$, large $\omega$, or small $\lambda$. In practical applications, we recommend using
$\delta$ to adjust the belief of prior information. As an example, in \eqref{post0}, we know that the posterior distribution
of $\alpha$ and $\beta$ is $AGG(\delta_p,\omega_p,\lambda_p)$. Special choices for $\omega$ and $\lambda$ can be
\begin{equation}\label{hyper}
	\omega=\prod_{i=1}^{n}\prod_{j=1}^{m}y_{ij}^{\frac{t_{j}}{nm\overline{T}_m}},
	~\lambda=\bar{y}_a,
\end{equation}
which are related to the observed data. Data-driven priors are not uncommon in statistics. For instance, Zellner's prior for regression coefficients \citep{zellner1986},  informative prior for threshold parameter \citep{halwan2005},
reference prior for linear degradation path model  \citep{xutang2012}, etc. These priors have
been demonstrated to be effective in practice. For \eqref{hyper},
several advantages should be indicated: (I)
The condition $\omega<\lambda$ for proper conjugate prior will be automatically satisfied in this setting.
%When measurements are equally spaced, $\omega$ is the geometric mean of $y_{ij}s$.
(II) $\omega$ and $\lambda$ determine the mode position of $\pi(\alpha,\beta)$, and this choice makes use of data information to suggest a reasonable
mode position.
(III) In this setting, the hyperparameter $\delta$ behaves like the number of measurements.
The value of $\delta$ can be chosen according to measurement-equivalent of the amount of information, e.g., $\delta=1$ can be interpreted as the amount of prior information equivalent to that of a system taking one measurement; $\delta=0$ represents noninformative prior. Thus, $\delta$
represents the belief of mode position suggested by \eqref{hyper}.
In terms of these merits, we will utilize the automatic strategy  \eqref{hyper}
for specifying hyperparameter values in simulation studies and data analysis, which could
greatly simplify conjugate prior specification.

\begin{figure}[htp]
	\vspace{-15mm}
	\centering
	\includegraphics[width=15cm]{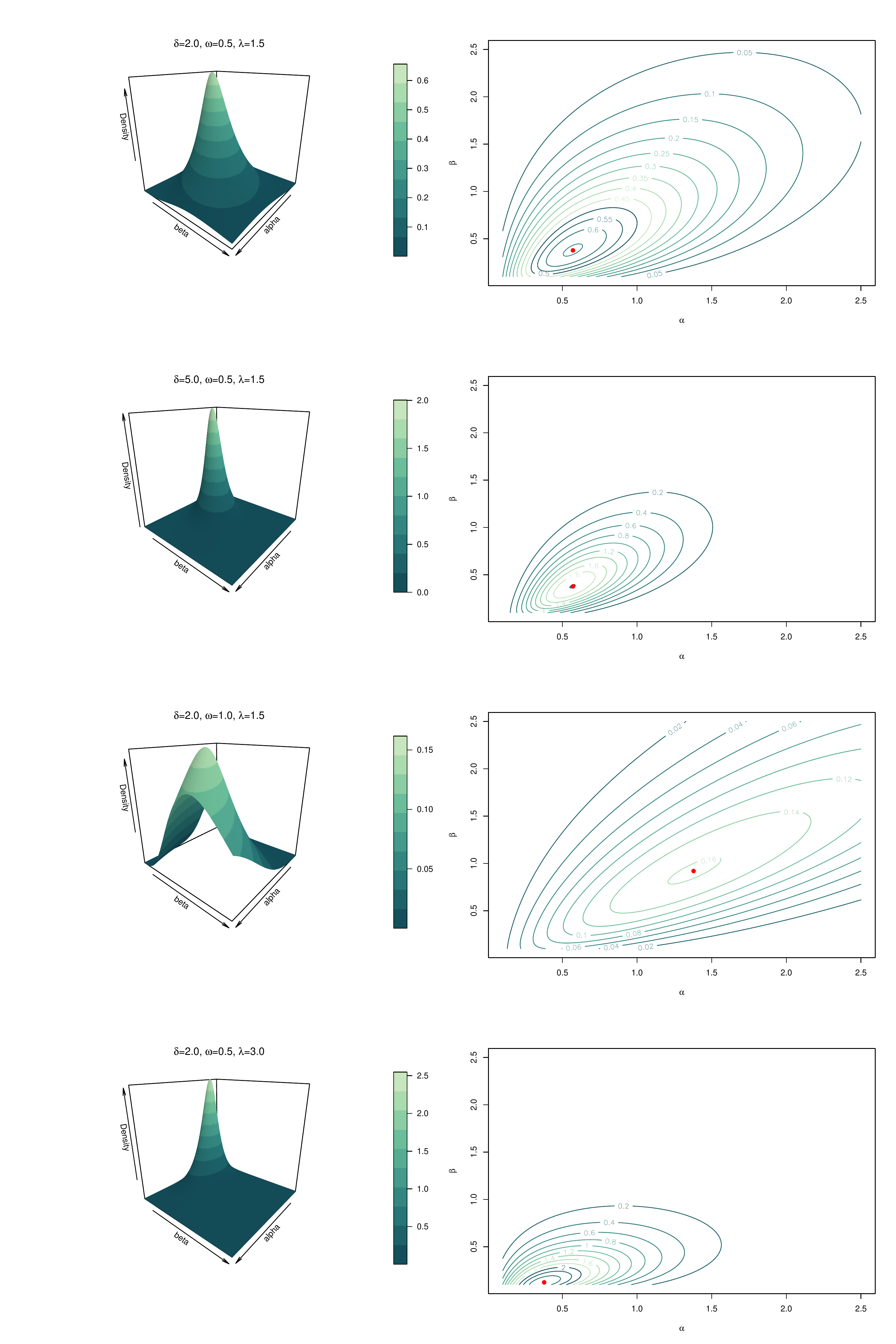}
	\caption{The figures on the left show the PDFs of conjugate priors and the figures on the right
		are the contours of their corresponding PDFs.}\label{config0}
\end{figure}

%For the sake of simplifying notations, we assume that the measurements are equally spaced. That is, the measurement time epoch $t_j= l$.

% \begin{equation}\label{like1}
% 	\begin{aligned}
% 		L( \bm{y}|\alpha,\beta)&=\prod_{i=1}^{n}\prod_{j=1}^{m}\dfrac{\beta^{\alpha l}}{\Gamma(\alpha l)}
% 		y_{ij}^{\alpha l-1}\exp\{-\beta y_{ij}\}\\
% 		&\propto \dfrac{\beta^{mnl\alpha}}{\left[\Gamma(\alpha l)\right]^{mn}}
% 		\bar{y}_g^{mnl\alpha}\exp\{-mn\bar{y}_a\beta\},
% 	\end{aligned}
% \end{equation}
% where $\bar{y}_g=\left[\prod_{i=1}^{n}\prod_{j=1}^{m}y_{ij}\right]^{1/(mn)}$ and
% $\bar{y}_a=\frac{1}{mn}\sum_{i=1}^{n}\sum_{j=1}^{m}y_{ij}$.

\section{Posterior sampling}
\label{sec:ps}
Since the posterior $\pi(\alpha,\beta|\bm{y})$  is not a regular bivariate distribution,
explicit computations of  Bayesian estimates for $\alpha$ and $\beta$ are not possible. Monte Carlo methods provide
an alternative method to do posterior inference for intractable posterior distributions. The fundamental idea behind Monte Carlo methods
is to generate random numbers from the joint posterior distribution of parameters and to obtain the point and interval estimates of the parameters or functions of parameters by the simulated numbers. In this section, we present three algorithms to simulate random numbers from AGG distribution. %Without loss of generality, we take $AGG(\delta,\omega,\lambda)$ as an example to
%illustrate the algorithms, because the joint posterior distribution of $\alpha$ and $\beta$ is also
% AGG distribution but with different hyperparameters.

The first algorithm is Gibbs sampling, a special Markov chain Monte Carlo algorithm. Gibbs sampling is implemented based on full conditional posterior densities $\pi(\beta|\alpha,\bm{y})$ and $\pi(\alpha|\beta,\bm{y})$. Similar to
\eqref{deco0}, we know that the full conditional posterior density $\pi(\beta|\alpha,\bm{y})$ is gamma distribution
$Ga(\delta_p\bar{T}_m\alpha,\delta_p\lambda_p)$, and that the full conditional posterior density $\pi(\alpha|\beta,\bm{y})$ is proportional to
$$\dfrac{(\beta\omega_p)^{\delta_p \overline{T}_m\alpha}}{\left[\prod_{j=1}^{m}\left(\Gamma(\alpha t_j)\right)^{1/m}\right]^{\delta_p}}.$$
It can be easily shown that $\pi(\alpha|\beta,\bm{y})$ is log-concave. Then adaptive rejection sampling (ARS) algorithm  can be utilized to generate random numbers from $\pi(\alpha|\beta,\bm{y})$ \citep{gilwil1992}.
After obtaining posterior samples of $\alpha$ and $\beta$,   any function of the parameters
$\eta=p(\alpha,\beta)$ (e.g., reliability of system, MTTF) can also be estimated. The procedure of posterior inference by Gibbs sampling is summarized in Algorithm \ref{algo1}.

\begin{algorithm}
	\caption{Gibbs sampling}\label{algo1}
	\LinesNumbered
	\KwIn{Observed data $\bm{y}$.\\}
	\KwOut{The point estimates and $100(1-\rho)\%$ credible intervals of $\alpha,\beta$ and $\eta$.}
	\BlankLine
	Initialize $\alpha$ and $\beta$ randomly (say, $\alpha_0$ and $\beta_0$), and compute $\delta_p$, $\omega_p$ and $\lambda_p$.
	
	\For{$k$ in $\{1,2,\dots,K_1\}$}{
		Generate $\beta_k$ from $Ga(\delta_p\bar{T}_m\alpha_{k-1},\delta_p\lambda_p)$;
		
		Generate $\alpha_k$ from $\pi(\alpha|\beta_k,\bm{y})$ by ARS algorithm;
		
		Compute $\eta_k=p(\alpha_k,\beta_k)$.
	}	
	Discard the first $B$ burn-in random numbers, and choose thinning interval $L$ to eliminate autocorrelation among posterior samples.
	
	Calculate the point  and $100(1-\rho)\%$ estimates for these parameters by posterior sample $\{(\alpha_k,\beta_k,\eta_k),k=B+1,B+L+1,B+2L+1,\dots,K_1\}$.	
\end{algorithm}

The second algorithm is discrete grid sampling (DGS).
The posterior $\pi(\alpha,\beta|\bm{y})$ can be decomposed into  $\pi(\beta|\alpha,\bm{y})\pi(\alpha|\bm{y})$, where $\pi(\beta|\alpha,\bm{y})$ is $Ga(\delta_p\bar{T}_m\alpha,\delta_p\lambda_p)$, and $\pi(\alpha|\bm{y})$ is proportional to
\begin{equation}\label{pmargi}
	h_p(\alpha)=	\dfrac{\left(\dfrac{\omega_p}{\delta_p\lambda_p}\right)^{\delta_p \overline{T}_m\alpha}\Gamma\left(1+\delta_p \overline{T}_m\alpha\right)}{\left[\prod_{j=1}^{m}\left(\Gamma(\alpha t_j)\right)^{1/m}\right]^{\delta_p}}.
\end{equation}
According to the decomposition, we know that once a random number $\alpha_0$ from $\pi(\alpha|\bm{y})$ is given,
random number of $\beta$ can be generated directly from $Ga(\delta_p\bar{T}_m\alpha_0,\delta_p\lambda_p)$. The main
difficulty arises from generating the random number of $\alpha$. Because of the complicated nature of $\pi(\alpha|\bm{y})$,
we use a particularly
simple simulation approach, approximating the marginal posterior distribution by a discrete distribution supported on a set of grid points, which provides sufficiently accurate inferences for $\alpha$. Firstly, an interval
$[A_1,A_2]$ that captures almost all the mass of $\pi(\alpha|\bm{y})$ is chosen,
which can be checked whether $\int_{A_1}^{A_2}\pi(\alpha|\bm{y})\text{d}\alpha$ is sufficiently close to 1.
Six-sigma rule can be utilized to construct a reasonable interval for $\alpha$. The procedure is summarized
below.

\vspace{4mm}
\setlength{\fboxrule}{1pt}\fbox{\shortstack[l]{
		1. Let $\tilde{\alpha}=\mathop{\arg\max}\limits_{\alpha} \log h_p(\alpha)$ and
		$I\left(\tilde{\alpha}\right)=-\dfrac{\partial^2\log h_p(\alpha)}{\partial \alpha^2}\biggl|_{\alpha=\tilde{\alpha}}$.
		\\
		2. According to Berger (1985), $\pi(\alpha|\bm{y})$ can be approximated by
		normal distribution \\
		~~~~$N\left(\tilde{\alpha},\tilde{\sigma}^2\right)$,
		where $\tilde{\sigma}=\sqrt{1/I\left(\tilde{\alpha}\right)}$.\\
		3. Let
		$A_1=\max\{0,\tilde{\alpha}-6\tilde{\sigma}\}$ and $A_2=\tilde{\alpha}+6\tilde{\sigma}$.
		Then, according to the property of normal\\
		~~~ distribution, we know that
		the probability that $\alpha$ falls into interval $[A_1,A_2]$ is almost 1.
	}
}

\vspace{3mm}
Given $[A_1,A_2]$,
we select $M$ grid points $\left\{A_1=\alpha^{(1)},\alpha^{(2)},\dots,A_2=\alpha^{(M)}\right\}$
in the interval with equally spaced, and compute the probability for each grid point by unnormalized posterior density
$h_p(\alpha)$:
\begin{equation}\label{disalp}
	P(\alpha=\alpha^{(s)})=\dfrac{h_p\left(\alpha^{(s)}\right)}{\sum_{i=1}^{M}h_p\left(\alpha^{(i)}\right)},~ s=1,\dots,M.
\end{equation}
The approximation precision can be guaranteed with sufficient large $M$. The use of discrete approximation reduces the difficulty of sampling from $\pi(\alpha|\bm{y})$ greatly, because
simulating random number from discrete distribution is straightforward by statistical software, for instance,
the function {\it sample()} in R language. The procedure of posterior inference by DGS is summarized in Algorithm \ref{algo2}.

\begin{algorithm}
	\caption{DGS}\label{algo2}
	\LinesNumbered
	\KwIn{Observed data $\bm{y}$.\\}
	\KwOut{The point estimates and $100(1-\rho)\%$ credible intervals of $\alpha,\beta$ and $\eta$.}
	\BlankLine
	Compute $\delta_p$, $\omega_p$ and $\lambda_p$.
	
	Determine interval $[A_1,A_2]$ by six-sigma rule.
	
	Choose $M$ grid points $\left\{A_1=\alpha^{(1)},\alpha^{(2)},\dots,A_2=\alpha^{(M)}\right\}$, and compute
	the probability for each grid \eqref{disalp}.
	
	\For{$k$ in $\{1,2,\dots,K_2\}$}{
		Generate $\alpha_k$ from discrete distribution \eqref{disalp};
		
		Generate $\beta_k$ from $Ga(\delta_p\bar{T}_m\alpha_{k},\delta_p\lambda_p)$;
		
		Compute $\eta_k=p(\alpha_k,\beta_k)$.
	}	
	Calculate the point and $100(1-\rho)\%$ interval estimates of these parameters by posterior sample $\{(\alpha_k,\beta_k,\eta_k),k=1,\dots,K_2\}$.		
\end{algorithm}

The distinction between the third algorithm and the second algorithm mainly lies in the method of generating posterior samples from $\pi(\alpha|\bm{y})$, in which sampling importance resampling (SIR) is adopted. In SIR, rather than sampling from $\pi(\alpha|\bm{y})$ directly, the sampling step is carried out from an instrumental distribution $g(\alpha)$. There is little restriction on the choice of $g(\alpha)$, which can be chosen from a set of  distributions
that can be easily simulated. However, the efficiency of SIR depends on how closely $g(\alpha)$ can imitate $\pi(\alpha|\bm{y})$, especially in the tails of the distribution. Similar to \eqref{appro0}, we know that
the tail of $\pi(\alpha|\bm{y})$
has the same order as that of gamma distribution with shape parameter $(\delta_p+3)/2$
and scale parameter
$\nu=\delta_p\overline{T}_m
\left[\log\left(\frac{\lambda_p}{\omega_p}\right)+\log\left(\frac{\prod_{j=1}^{m}t_j^{t_j/T_m}}{\overline{T}_m}\right)\right]$. Thus, we choose gamma
distribution $Ga(a,b)$ as instrumental distribution. The values of $a$ and $b$ can be determined as follows.

\vspace{4mm}
\setlength{\fboxrule}{1pt}\fbox{\shortstack[l]{
		1. Let $\tilde{\alpha}=\mathop{\arg\max}\limits_{\alpha} \log h_p(\alpha)$ and
		$I\left(\tilde{\alpha}\right)=-\dfrac{\partial^2\log h_p(\alpha)}{\partial \alpha^2}\biggl|_{\alpha=\tilde{\alpha}}$.
		\\
		2. Initialize $b$ as $b_0=\nu$
		and $a$ as $a_0=\tilde{\alpha}b_0$. The initialized step ensures that the mean of
		\\~~~~$Ga(a_0,b_0)$  is $\tilde{\alpha}$.
		\\
		3.  Compute the precision ratio $R=\frac{b_0^2/a_0}{I\left(\tilde{\alpha}\right)}$, and
		update $a=a_0/R$ and $b=b_0/R$.
		This step \\
		~~~~does not change the mean of instrumental distribution but  makes the variance of \\
		~~~~$Ga(a,b)$
		consistent with the asymptotic variance of $\pi(\alpha|\bm{y})$.}
}

\vspace{3mm}
Once the instrumental distribution $Ga(a,b)$ is determined, we simulate $M$ random numbers $\left\{\alpha^{(1)},\alpha^{(2)},\dots,\alpha^{(M)}\right\}$  from $Ga(a,b)$, and
compute the weights $w_i=h_p\left(\alpha^{(i)}\right)/f_{Ga}\left(\alpha^{(i)}|a,b\right)$,
$i=1,\dots,M$, where $f_{Ga}\left(\alpha^{(i)}|a,b\right)$
denotes the PDF value of  $Ga(a,b)$ at $\alpha^{(i)}$. Then normalizing the weights
$\tilde{w}_i=w_i/\sum_{j=1}^{M}w_j$. In the resampling step, we generate
random numbers of $\alpha$ from discrete distribution
\begin{equation}\label{dis2}
	P\left(\alpha=\alpha^{(i)}\right)=\tilde{w}_i,~i=1,2,\dots,M.
\end{equation}
The procedure of posterior inference by SIR is summarized in Algorithm \ref{algo3}.

\begin{algorithm}
	\caption{Sampling importance resampling}\label{algo3}
	\LinesNumbered
	\KwIn{Observed data $\bm{y}$.\\}
	\KwOut{The point estimates and $100(1-\rho)\%$ credible intervals of $\alpha,\beta$ and $\eta$.}
	\BlankLine
	Compute $\delta_p$, $\omega_p$ and $\lambda_p$.
	
	Determine $a$ and $b$ according to the three steps described above.
	
	Generate $M$ random numbers from $Ga(a,b)$, and construct discrete distribution \eqref{dis2}.
	
	\For{$k$ in $\{1,2,\dots,K_3\}$}{
		Generate $\alpha_k$ from discrete distribution \eqref{dis2};
		
		Generate $\beta_k$ from $Ga(\delta_p\bar{T}_m\alpha_{k},\delta_p\lambda_p)$;
		
		Compute $\eta_k=p(\alpha_k,\beta_k)$.
	}	
	Calculate the point and $100(1-\rho)\%$ interval estimates of these parameters by posterior sample $\{(\alpha_k,\beta_k,\eta_k),k=1,\dots,K_3\}$.		
\end{algorithm}

\section{Simulation studies}
\label{sec:sim}
Before performing simulation studies,  a real dataset is analyzed according to the proposed algorithms.
The data are from Meeker and Escobar (1998), which demonstrates the increase in operating current over time for  15 GaAs devices tested at $80^\circ $C.   Measurements of the  increase in operating current are carried out every 250 hours, and the  termination time of the experiment is 4000 hours.
The failure threshold of the device is  10\% increase in the
operating current. Thus, $n=15$, $m=16$ and $\mathbb{C}=10$ in this dataset. The data are shown in Figure \ref{laserdata}, and we can see that the degradation values of  three devices have crossed to the threshold before
test termination time.
Assume that degradation path of the laser device follows gamma process $\mathcal{GP}(\alpha t,\beta)$. Bayesian inference is performed based on  the conjugate prior \eqref{con0}, where $\delta=1$ and
$\lambda=\bar{y}_a$.
$\omega=\bar{y}_g=\prod_{i=1}^{n}\prod_{1}^{m}y_{ij}^{1/(mn)}$ is the geometric mean of $y_{ij}s$, because measurements are evenly spaced.
As we have discussed in Section \ref{sec:prior}, $\delta=1$ means that
the prior information is equivalent to that of taking one measurement.
Compared to data with totally $mn$ measurements, the prior information is quite limited.
Then the posterior distribution of $\alpha$ and $\beta$ is
$AGG\left(mn+1,\prod_{i=1}^{n}\prod_{1}^{m}y_{ij}^{1/(mn)},\bar{y}_a\right)$.
The proposed algorithms are applied to obtain the point estimates and 95\% credible intervals of
$\alpha$ and $\beta$, as well as the reliability of the device at time 4500 hours $R(4500)$.
In the Gibbs sampling, the number of iteration $K_1$ is 3,000 with the first 1,000 burn-in samples and the thinning interval is two. Thus, the effective sample size for posterior inference is 1,000. In the DGS,
the interval for discretization is [0,10], and the number of grid points is 10,000. The sample size for posterior inference is also 1,000. In the SIR, we set $M=10,000$ and $K_3=1,000$.
The results based on the three algorithms are listed in Table \ref{dat1}, where ``GS" denotes the algorithms based on Gibbs sampling. As can be seen in Table \ref{dat1}, the Bayesian point estimates and 95\% credible intervals of $\alpha$, $\beta$ and $R(4500)$ based on the
three algorithms are almost the same.

\begin{figure}[htp]
	\centering
	\includegraphics[width=12cm]{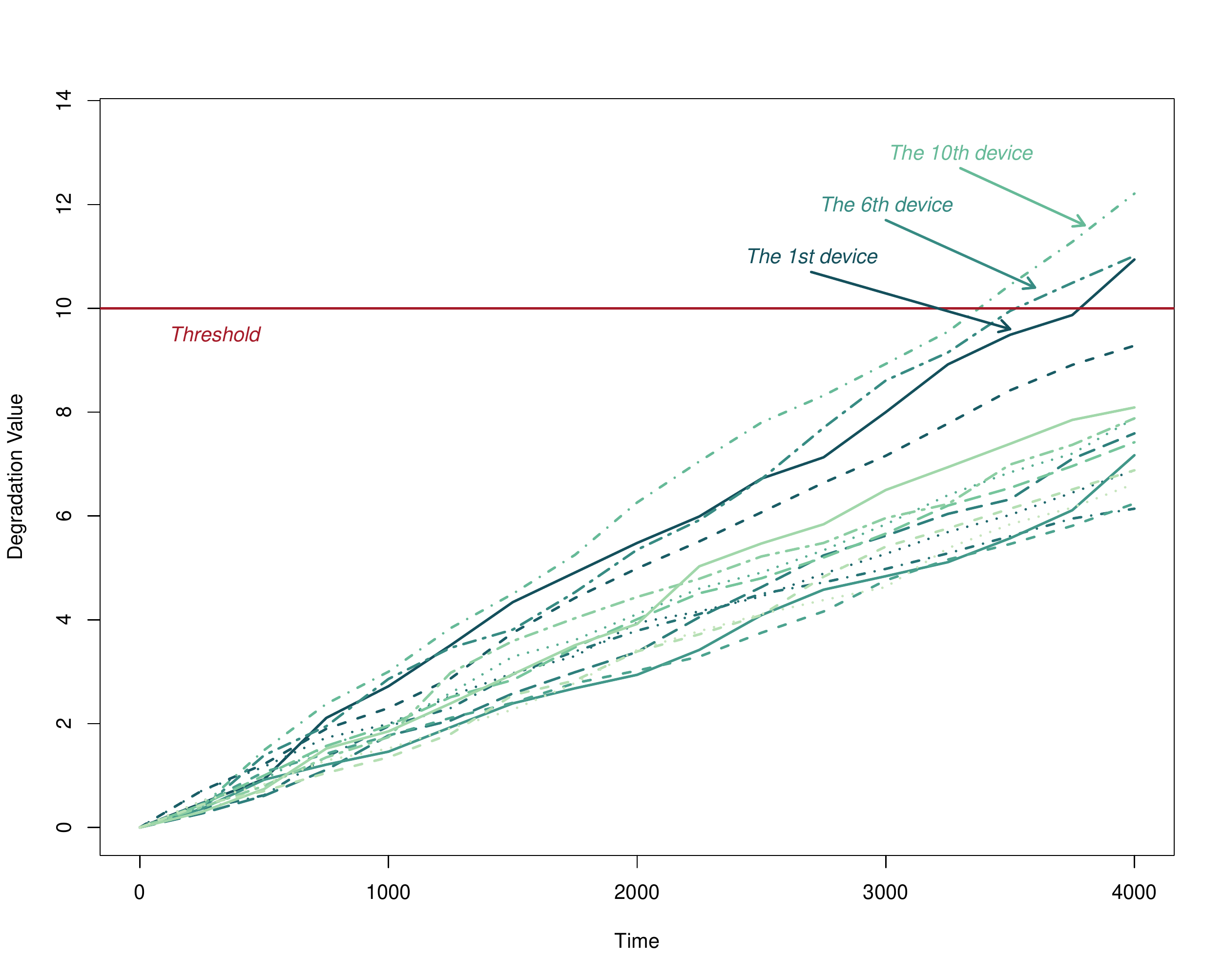}
	\caption{Laser degradation data.}\label{laserdata}
\end{figure}

\begin{table}[htbp]
	\centering
	\caption{The point estimates and 95\% credible intervals of $\alpha$, $\beta$ and $R(4500)$.}\label{dat1}
	\scalebox{1}{	\begin{tabular}{cccccccccc}
			\toprule
			\multirow{2}[4]{*}{Estimate} & \multicolumn{3}{c}{GS} & \multicolumn{3}{c}{DGS} & \multicolumn{3}{c}{SIR} \\
			\cmidrule{2-10}
			&
			$\alpha$	& $\beta$  & $R(4500)$  & $\alpha$	& $\beta$  & $R(4500)$ &$\alpha$	& $\beta$  & $R(4500)$  \\
			\midrule
			Point &  0.0309&15.342&0.879& 0.0308&15.325&0.878&0.0310&15.438&0.882\\
			2.5\%  & 0.0258&12.693& 0.740& 0.0260&12.698&0.737&0.0256&12.677& 0.743\\
			97.5\% & 0.0366&18.332&0.963& 0.0370&18.328&0.962&0.0366&18.368&0.964\\
			\bottomrule
	\end{tabular}}
\end{table}

To compare the three algorithms elaborately, simulation studies are performed under conjugate
priors with varying amounts of information. The data are generated from $\mathcal{GP}(\alpha t,\beta)$,
where $\alpha=0.031$ and $\beta=15.35$ (close to the estimates in Table \ref{dat1}).
There are a total of $n=15$ specimens tested, and each specimen is  measured every 250 hours.
The measurement times $m=16$. The failure threshold is set as 10. The conjugate prior is
$AGG(\delta,\bar{y}_g, \bar{y}_a)$, and we choose $\delta=0,1,\frac{m}{4},\frac{m}{2}$ to
evaluate the impact of the prior information content on the results.

We generate $N=10,000$ datasets, and the proposed algorithms are used to obtain the Bayesian point estimates
and 95\% credible intervals
of $\alpha$, $\beta$, $R(4500)$ and MTTF for each dataset. Then based on the 10,000 Bayesian point estimates,  absolute relative error (RB) and  root mean square error (RMSE) are computed, that is,
$$\text{RB}=\dfrac{1}{N}\sum_{i=1}^{N}\Bigg|\dfrac{\text{Estimate}_i-\text{True value}}{\text{True value}}\Bigg|,~
\text{RMSE}=\sqrt{\dfrac{1}{N}\sum_{i=1}^{N}\left(\text{Estimate}_i-\text{True value}\right)^2}.$$
The results are reported in Tables \ref{arb} and \ref{rmse}. Overall,  the parameters can be satisfactorily
estimated in all the scenarios, as the RBs of Bayesian estimates of $\alpha$, $\beta$ and $R(4500)$ are around
2\%, and the RBs of Bayesian estimates of MTTF are around 0.1\%. For both RB and RMSE, the three algorithms
perform nearly identically, and the influence of $\delta$ (different amount of prior information) on the estimates
are insignificant.
For interval estimate  of the parameters, we calculate average length and
frequentist coverage probability (FCP) of the 95\% credible intervals. The results are listed in Tables
\ref{length} and \ref{fcp}.  An interesting phenomenon lies in the lengths of 95\% credible intervals.
The intervals narrow as the amount of prior information increases, and the difference among three algorithms is insignificant.
However, the FCPs display a different pattern. For the model parameters $\alpha$ and $\beta$,
the FCPs based on DGS and SIR are much closer to the nominal level than these based on GS
regardless of $\delta$.
While for $R(4500)$ and MTTF,  the FCPs based on the three algorithms
are  always very close to the nominal level, indicating the superiority of the proposed posterior sampling algorithms.
The almost accurate quantification of the uncertainties evidently provides precise information on the system reliability and MTTF, which is useful in RUL prediction.

\begin{table}[htbp]
	\centering
	\caption{RBs of point estimates of the parameters.}\label{arb}
	\scalebox{1}{	\begin{tabular}{ccccccccc}
			\toprule
			\multirow{2}[4]{*}{Algorithm} & \multicolumn{4}{c}{$\delta=0$} & \multicolumn{4}{c}{$\delta=1$} \\
			\cmidrule{2-9}
			&
			$\alpha$	& $\beta$  & $R(4500)$ & MTTF & $\alpha$	& $\beta$  & $R(4500)$ & MTTF \\
			\midrule
			GS &  0.0245&	0.0256	& 0.0161&  0.00109&	0.0243&	0.0254&	 0.0161&	0.00108\\
			DGS  & 0.0245&	0.0256&	 0.0161&	0.0011&	0.0245&	0.0256&	0.0161&	0.00109\\
			SIR & 0.0245&	0.0256&	0.0161&	0.00109&	0.0245&	0.0256&	0.0161&	0.00109\\
			\midrule
			\multirow{2}[4]{*}{Algorithm} & \multicolumn{4}{c}{$\delta=\frac{m}{4}$} & \multicolumn{4}{c}{$\delta=\frac{m}{2}$} \\
			\cmidrule{2-9}
			&
			$\alpha$	& $\beta$  & $R(4500)$ & MTTF & $\alpha$	& $\beta$  & $R(4500)$ & MTTF \\
			\midrule
			GS &  0.0233&	0.0247&	0.0153&	0.00136&	0.0234&	0.0248&	0.0151&	0.00137\\
			DGS  & 0.0234&	0.0249&	0.0152&	0.00137&	0.0233&	0.0247&	0.0151&	0.00136\\
			SIR & 0.0234&	0.0249&	0.0152&	0.00138&	0.0232&	0.0246&	0.0151&	0.00136\\
			\bottomrule
	\end{tabular}}
\end{table}

\begin{table}[htbp]
	\centering
	\caption{RMSEs of point estimates of the parameters.}\label{rmse}
	\scalebox{1}{	\begin{tabular}{ccccccccc}
			\toprule
			\multirow{2}[4]{*}{Algorithm} & \multicolumn{4}{c}{$\delta=0$} & \multicolumn{4}{c}{$\delta=1$} \\
			\cmidrule{2-9}
			&
			$\alpha$	& $\beta$  & $R(4500)$ & MTTF & $\alpha$	& $\beta$  & $R(4500)$ & MTTF \\
			\midrule
			GS &  0.00302&	1.547&	0.0601&	115.218&	0.00302&	1.547&	0.0601&	115.205\\
			DGS  & 0.00301&1.539&0.0601&	115.267&	0.00301&	1.538&	0.0601&	115.258\\
			SIR & 0.00301&	1.539&	0.0601&	115.267&	0.00301&	1.538&	0.0601&	115.258\\
			\midrule
			\multirow{2}[4]{*}{Algorithm} & \multicolumn{4}{c}{$\delta=\frac{m}{4}$} & \multicolumn{4}{c}{$\delta=\frac{m}{2}$} \\
			\cmidrule{2-9}
			&
			$\alpha$	& $\beta$  & $R(4500)$ & MTTF & $\alpha$	& $\beta$  & $R(4500)$ & MTTF \\
			\midrule
			GS &  0.00302&1.546&	0.0599&	115.204&	0.00296&	1.512&	0.0607&	116.192\\
			DGS  & 0.00301& 1.537&0.0600&115.250&	0.00294&1.501&	0.0607&	116.205\\
			SIR & 0.00301&	1.537&	0.0600&	115.250&	0.00294&	1.501&	0.0607&	116.205\\
			\bottomrule
	\end{tabular}}
\end{table}

\begin{table}[htbp]
	\centering
	\caption{Lengths of 95\% credible intervals of the parameters.}\label{length}
	\scalebox{1}{	\begin{tabular}{ccccccccc}
			\toprule
			\multirow{2}[4]{*}{Algorithm} & \multicolumn{4}{c}{$\delta=0$} & \multicolumn{4}{c}{$\delta=1$} \\
			\cmidrule{2-9}
			&
			$\alpha$	& $\beta$  & $R(4500)$ & MTTF & $\alpha$	& $\beta$  & $R(4500)$ & MTTF \\
			\midrule
			GS &  0.0109&	5.588&	0.224&	447.444&	0.0109&	5.585&	0.224&	446.609\\
			DGS  & 0.0109&	5.629&	0.224&	446.274	&0.0109&	5.620&	0.223&	445.596\\
			SIR & 0.0110&	5.630&	0.224&	446.414&	0.0109&	5.624&	0.223&	445.318\\
			\midrule
			\multirow{2}[4]{*}{Algorithm} & \multicolumn{4}{c}{$\delta=\frac{m}{4}$} & \multicolumn{4}{c}{$\delta=\frac{m}{2}$} \\
			\cmidrule{2-9}
			&
			$\alpha$	& $\beta$  & $R(4500)$ & MTTF & $\alpha$	& $\beta$  & $R(4500)$ & MTTF \\
			\midrule
			GS &  0.0108&	5.541&	0.222&	444.082&	0.0107&	5.502&	0.220&	440.493\\
			DGS  & 0.0108&	5.582&	0.221&	443.013	&0.0107&	5.535&	0.219&	439.562\\
			SIR & 0.0109&	5.581&	0.221&	443.131&	0.0108&	5.536&	0.219&	439.373\\
			\bottomrule
	\end{tabular}}
\end{table}

\begin{table}[htbp]
	\centering
	\caption{Frequentist coverage probabilities of 95\% credible intervals of the parameters.}\label{fcp}
	\scalebox{1}{	\begin{tabular}{ccccccccc}
			\toprule
			\multirow{2}[4]{*}{Algorithm} & \multicolumn{4}{c}{$\delta=0$} & \multicolumn{4}{c}{$\delta=1$} \\
			\cmidrule{2-9}
			&
			$\alpha$	& $\beta$  & $R(4500)$ & MTTF & $\alpha$	& $\beta$  & $R(4500)$ & MTTF \\
			\midrule
			GS &  0.9372&	0.9369&	0.9474&	0.9492&	0.9344&	0.9336&	0.9458&	0.9475\\
			DGS  & 0.9424&	0.9424&	0.9463&	0.9481&	0.9416&	0.9396&	0.9460&	0.9468\\
			SIR & 0.9418&	0.9406&	0.9467&	0.9473&	0.9411&	0.9413&	0.9471&	0.9484\\
			\midrule
			\multirow{2}[4]{*}{Algorithm} & \multicolumn{4}{c}{$\delta=\frac{m}{4}$} & \multicolumn{4}{c}{$\delta=\frac{m}{2}$} \\
			\cmidrule{2-9}
			&
			$\alpha$	& $\beta$  & $R(4500)$ & MTTF & $\alpha$	& $\beta$  & $R(4500)$ & MTTF \\
			\midrule
			GS &  0.9340&	0.9337&	0.9455&	0.9478&	0.9385&	0.9369&	0.9399&	0.9413\\
			DGS  & 0.9394&	0.9384&	0.9454&	0.9466&	0.9434&	0.9428&	0.9411&	0.9423\\
			SIR & 0.9406&	0.9384&	0.9451&	0.9456&	0.9451&	0.9424&	0.9415&	0.9427\\
			\bottomrule
	\end{tabular}}
\end{table}

The average computational time of the three algorithms for each dataset
is 0.602, 0.00341 and 0.00499 seconds in
a desktop with Intel(R) Core(TM) i7-10700 CPU at 2.9 GHz and 16 GB RAM running under a Windows 11 operating system. Therefore, the computational efficiency of the DGS and SIR algorithms is comparable, which
are more than one hundred times faster than the GS algorithm. Computational efficiency
is an important index  in the scenario of
online inference, because the posterior distribution is updated in real-time as
new observations are collected and
posterior inference needs to be completed as soon as possible on the premise of ensuring the estimation accuracy. As listed in Tables \ref{arb}-\ref{fcp},
DGS and SIR algorithms are not only high
efficient in terms of computation, but also have high estimation accuracy, which meets the requirements of  online inference. In the following sections,  we mainly utilize the two algorithms
to predict RUL online.

%		However, DGS algorithm needs to specify an interval for discretization in advance, which also requires some exploration time. Determination of interval may also be a drawback of DGS algorithm in the scenario of
%		online inference, because the posterior distribution will be updated in real time after
%		new observations collected, which causes interval for discretization to change constantly.
%		While the implementation of GS and SIR algorithms does not contain any predetermined quantities,
%		and they are more flexible than DGS algorithm. 	Besides, as listed in Table \ref{fcp}, the FCPs of $\alpha$
%		based on DGS algorithm are much farther to the nominal level than these based on the other two algorithms.
%		In general, the efficiency of DGS algorithm is lower than that of
%		GS and SIR algorithms.
%		

\section{Heterogeneity}
\label{sec:heter}
Heterogeneity usually exists among systems because of endogenous and  exogenous factors.
Endogenous factors could include variations in raw materials and assembly lines, while the exogenous factors could be
due to variations in operating environments and usage patterns.
Heterogeneity will cause the performance degradation of each system to show different patterns.
However, the systems come from the same population, and their failure mechanisms are consistent.
Thus, we assume that degradation  of the $i$-th system's PC follows gamma process $\mathcal{GP}(\alpha t,\beta_i)$ in this section. The  same shape parameter $\alpha$ denotes a common failure mechanism among systems, and
different scales $\beta_i$s represent heterogeneity existed among systems.

For the sake of simplifying notations, we assume that the measurements are equally spaced. That is, the lag between
two adjacent measurement time epochs is $l$. Assume that there are $n$ systems tested in the experiment.
Until time epoch $T_m=ml$, we have measured the degradation values of all the $n$ systems $m$ times.
Let $Y_{ij}$ be the degradation value of the $i$-th system at time epoch $T _j=jl$, $i=1,\dots,n,$ $j=1,\dots,m$.
The degradation increment $y_{ij}=Y_{ij}-Y_{ij-1}$, where $Y_{i0}=0$, $i=1,\dots,n,$ $j=1,\dots,m$.
At time epoch $T_m$, the observed data are $\bm{y_{(m)}}=\{y_{ij}, i=1,\dots,n,~j=1,\dots,m\}$.
Since $\mathcal{Y}_i(t)\sim \mathcal{GP}(\alpha t,\beta_i)$, we have $y_{ij}\sim Ga(\alpha l,\beta_i)$.
Then based on $\bm{y_{(m)}}$,
the likelihood function is
\begin{equation}\label{like2}
	\begin{aligned}
		L\left( \bm{y_{(m)}}|\alpha,\beta_1,\dots,\beta_n\right)&=\prod_{i=1}^{n}\prod_{j=1}^{m}\dfrac{\beta_i^{\alpha l}}{\Gamma(\alpha l)}
		y_{ij}^{\alpha l-1}\exp\{-\beta_i y_{ij}\}\\
		&\propto \dfrac{\bar{\beta}_g^{mnl\alpha}}{\left[\Gamma(\alpha l)\right]^{mn}}
		\bar{y}_{g(m)}^{mnl\alpha}\exp\left\{-\sum_{i=1}^{n}m\bar{y}_{i(m)}\beta\right\},
	\end{aligned}
\end{equation}
where $\bar{\beta}_{g}=\left[\prod_{i=1}^{n}\beta_{i}\right]^{1/n}$,
$\bar{y}_{g(m)}=\left[\prod_{i=1}^{n}\prod_{j=1}^{m}y_{ij}\right]^{\frac{1}{mn}}$ and
$\bar{y}_{i(m)}=\frac{1}{m}\sum_{j=1}^{m}y_{ij}$, $i=1, \dots, n$.

\begin{thy}
	Given the likelihood function \eqref{like2}, a conjugate prior for $(\alpha,\beta_1,\dots,\beta_n)^{'}$ is
	\begin{equation}\label{con2}
		\begin{aligned}
			\pi(\alpha,\beta_1,\dots,\beta_n)&=
			C \dfrac{\left(\bar{\beta}_g\omega\right)^{\delta_1 l\alpha}}{\left[\Gamma(l\alpha)\right]^{\delta_1}}
			\exp\left\{-\sum_{i=1}^{n}\delta_2\lambda_i\beta_i\right\},
		\end{aligned}
	\end{equation}
	where $C$ is a normalized constant, $\delta_1$, $\delta_2$, $\omega$ and $\lambda_i$s are hyperparameters with nonnegative values.
\end{thy}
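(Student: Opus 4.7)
The plan is to mirror the proof of Theorem \ref{th1}: multiply the likelihood \eqref{like2} by the candidate prior \eqref{con2}, collect the $\alpha$-dependent factors and the $\beta_i$-dependent exponential separately, and then exhibit updated hyperparameters $(\delta_{1p},\delta_{2p},\omega_p,\lambda_{ip})$ for which the product has exactly the form prescribed by \eqref{con2}. Because both the likelihood and the prior are already aligned in their structural dependence on $(\alpha,\beta_1,\dots,\beta_n)$, the argument should be pure bookkeeping rather than analysis.

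First I would write out $L(\bm{y_{(m)}}\mid\alpha,\beta_1,\dots,\beta_n)\pi(\alpha,\beta_1,\dots,\beta_n)$ and notice that the factors $\bar{\beta}_g^{\,\cdot\, l\alpha}$ and $[\Gamma(l\alpha)]^{-\cdot}$ combine with matching total exponents $mn+\delta_1$, so setting $\delta_{1p}:=mn+\delta_1$ makes the $\alpha$-block structurally identical to \eqref{con2}. The remaining $\alpha$-only pieces $\omega^{\delta_1 l\alpha}$ and $\bar{y}_{g(m)}^{\,mnl\alpha}$ must then be fused into a single base raised to $\delta_{1p}l\alpha$, which forces the weighted-geometric-mean update
\[
\omega_p \;=\; \omega^{\delta_1/(mn+\delta_1)}\,\bar{y}_{g(m)}^{\,mn/(mn+\delta_1)}.
\]

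Next I would collect the exponential in the $\beta_i$'s. The coefficient of $\beta_i$ is $m\bar{y}_{i(m)}+\delta_2\lambda_i$; to put this in the form $\delta_{2p}\lambda_{ip}$ I would set $\delta_{2p}:=m+\delta_2$ and $\lambda_{ip}:=(m\bar{y}_{i(m)}+\delta_2\lambda_i)/(m+\delta_2)$, a convex combination of the per-system data mean and the prior scale. Together with $\delta_{1p}$ and $\omega_p$ above, this exhibits the posterior as a member of the same family, completing the conjugacy verification.

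The only step requiring real care, and the place I expect to spend most of the bookkeeping, is disentangling the dual roles of the two kurtosis parameters: $\delta_1$ governs only the $\alpha$-block (via the $\bar{\beta}_g$ and $\Gamma(l\alpha)$ powers) while $\delta_2$ governs only the scale-block in the exponential. Since the likelihood contributes $mn$ replicates to the $\alpha$-factors but only $m$ replicates to each $\beta_i$-exponential, the updates $\delta_{1p}=mn+\delta_1$ and $\delta_{2p}=m+\delta_2$ differ, and I would state them explicitly and in that order to avoid confusion. Properness of \eqref{con2} (the analogue of the $\omega<\lambda$ restriction obtained from \eqref{appro0} in the homogeneous case) is a separate concern and is not required by the conjugacy claim itself, so I would defer it to a remark.
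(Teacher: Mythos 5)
Your proposal is correct and follows essentially the same route as the paper's proof: multiply the likelihood \eqref{like2} by the prior \eqref{con2} and read off the updated hyperparameters $\delta_{1p}=mn+\delta_1$, $\delta_{2p}=m+\delta_2$, $\omega_p=\omega^{\delta_1/(mn+\delta_1)}\bar{y}_{g(m)}^{mn/(mn+\delta_1)}$ and $\lambda_{ip}=\bigl(m\bar{y}_{i(m)}+\delta_2\lambda_i\bigr)/(m+\delta_2)$, which are exactly the quantities $\delta_{1p(m)},\delta_{2p(m)},\omega_{p(m)},\lambda_{ip(m)}$ in the paper. Deferring properness of the prior to a separate remark is also consistent with the paper, which treats the tail/properness analysis outside the conjugacy verification.
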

\begin{proof}[Proof:]
	Based on the likelihood function \eqref{like2} and the prior \eqref{con2}, the joint posterior
	density of $(\alpha,\beta_1,\dots,\beta_n)^{'}$ is
	\begin{equation}\label{post2}
		\begin{aligned}
			\pi(\alpha,\beta_1,\dots,\beta_n|\bm{y})&\propto 	L( \bm{y}|\alpha,\beta_1,\dots,\beta_n)\pi(\alpha,\beta_1,\dots,\beta_n)\\
			& \propto \dfrac{\bar{\beta}_{g}^{(mn+\delta_1)l\alpha}\bar{y}_{g(m)}^{mnl\alpha}\omega^{\delta_1 l\alpha}}{\left[\Gamma(l\alpha)\right]^{mn+\delta_1}}
			\exp\left\{-\sum_{i=1}^{n}\left(m\bar{y}_{i(m)}+\delta_2\lambda_i\right)\beta_i\right\}\\
			&\propto \dfrac{\left(\bar{\beta}_{g}\omega_{p(m)}\right)^{\delta_{1p(m)}l\alpha}}{\left[\Gamma(l\alpha)\right]^{\delta_{1p(m)}}}
			\exp\left\{-\sum_{i=1}^{n}\delta_{2p(m)}\lambda_{ip(m)}\beta_i\right\},
		\end{aligned}
	\end{equation}
	where $\delta_{1p(m)}=mn+\delta_1$, $\delta_{2p(m)}=m+\delta_2$,
	$\omega_{p(m)}=\omega^{\frac{\delta_1}{\delta_{1p(m)}}}\bar{y}_{g(m)}^{\frac{mn}{\delta_{1p(m)}}}$ and $\lambda_{ip(m)}=\frac{m}{\delta_{2p(m)}}\bar{y}_{i(m)}+\frac{\delta_2}{\delta_{2p(m)}}\lambda_i$, $i=1,\dots,n$. Thus, $\pi(\alpha,\beta_1,\dots,\beta_n)$
	and $\pi(\alpha,\beta_1,\dots,\beta_n|\bm{y})$ are from the same distribution family.
\end{proof}
When $\beta_1=\dots=\beta_n=\beta$, $\delta_1=n\delta_2=\delta$ and $\lambda_1=\dots=\lambda_n=\lambda$,
the conjugate prior \eqref{con2} is reduced to \eqref{con1}.
For bettering understanding the conjugate prior \eqref{con2}, we rewrite $\pi(\alpha,\beta_1,\dots,\beta_n)$ as
\begin{equation*}
	\begin{aligned}
		\pi(\alpha,\beta_1,\dots,\beta_n)&=\prod_{i=1}^{n}\pi(\beta_i|\alpha)\pi(\alpha)\\
		&\propto \prod_{i=1}^{n}\dfrac{(\delta_2\lambda_i)^{1+\delta_1l\alpha/n}\beta_i^{\delta_1 l\alpha/n}}{\Gamma(1+\delta_1 l\alpha/n)}\exp\{-\delta_2\lambda_i\beta_i\}\\
		&~\times \dfrac{\left[\Gamma(1+\delta_1 l\alpha/n)\right]^n}{\left[\Gamma(l\alpha)\right]^{\delta_1}}
		\exp\left\{-\alpha\delta_1 l\left[\log\left(\frac{\delta_2}{\omega}\right)+
		\frac{1}{n}\sum_{i=1}^{n}\log\lambda_i\right]\right\}.	
	\end{aligned}
\end{equation*}
Thus, given $\alpha$, the conditional density of  $\beta_i$ is $Ga(1+\delta_1 l\alpha/n,\delta_2\lambda_i)$,
and  the marginal density of $\alpha$ is proportional to
\begin{equation}\label{margi2}
	g(\alpha)=\dfrac{\left[\Gamma(1+\delta_1 l\alpha/n)\right]^n}{\left[\Gamma(l\alpha)\right]^{\delta_1}}	\exp\left\{-\alpha\delta_1 l\left[\log\left(\frac{\delta_2}{\omega}\right)+
	\frac{1}{n}\sum_{i=1}^{n}\log\lambda_i\right]\right\}.	
\end{equation}
Using Stirling's formula and as $\alpha\rightarrow\infty$, we have
$$g(\alpha)\equiv O\left(\alpha^{\frac{\delta_1+n}{2}}
\exp\left\{-A\alpha\right\}\right),$$
where $A=\delta_1 l\left[\log\left(\frac{n\delta_2}{\delta_1}\right)+\frac{1}{n}\sum_{i=1}^{n}\log\left(\frac{\lambda_i}{\omega}\right)\right]$.
Then the tail of $\pi(\alpha)$ can be approximated by $Ga\left(\frac{\delta_1+n+2}{2},K\right)$ when $A>0$.
Due to the tail property of $\pi(\alpha)$,
we call $\pi(\alpha,\beta_1,\dots,\beta_n)$ approximated-gamma-multivariate-gamma (AGMG) distribution with dimension $n$, denoted as $AGMG_n(\bm\gamma,\omega,\bm\xi)$, where $\bm\gamma=(\delta_1,\delta_2)^{'}$, and
$\bm\xi=(\lambda_1,\dots,\lambda_n)^{'}$.

Based on \eqref{post2}, we know that the posterior of $(\alpha,\beta_1,\dots,\beta_n)^{'}$ is
$AGMG_n\left(\bm\gamma_{p(m)},\omega_{p(m)},\bm\xi_{p(m)}\right)$, where $\bm\gamma_{p(m)}=(\delta_{1p(m)},\delta_{2p(m)})^{'}$
and $\bm\xi_{p(m)}=(\lambda_{1p(m)},\dots,\lambda_{np(m)})^{'}$. Special choices for hyperparameters $\omega$
and $\bm\xi$ are
$\bar{y}_{g(m)}$ and $\bm{\bar{y}}_{(m)}=\left(\bar{y}_{1(m)},\dots,\bar{y}_{n(m)}\right)^{'}$, respectively.
In this setting, the hyperparameters $\delta_1$ and  $\delta_2$ behave like number of measurements, because the posterior will be $AGMG_n(\bm\gamma_{p(m)},\bar{y}_{g(m)},\bm{\bar{y}}_{(m)})$. Similar to \eqref{con1},
$\delta_1$ and  $\delta_2$ mainly describe kurtosis of AGMG distribution, which control the belief of prior
information. The generation of random numbers from AGMG distribution can be implemented by the algorithms
\ref{algo2} and \ref{algo3}
with slight modifications. The main difference is that the optimization object is replaced by the posterior marginal distribution of $\alpha$ in $AGMG_n\left(\bm\gamma_{p(m)},\omega_{p(m)},\bm\xi_{p(m)}\right)$, and given $\alpha$,
$\beta_i$ is simulated from gamma distribution $Ga\left(1+\delta_{1p(m)} l\alpha/n,\delta_{2p(m)}\lambda_{ip(m)}\right)$,
$i=1,\dots,n$. The computational time of the two algorithms (DGS and SIR) is proportional to the dimension $n$. As an illustration,
we implement the two algorithms for AGMG distributions with  $n$ from 2 to 50 under the same parameter settings,
and the computational time of the two algorithms is shown in
Figure \ref{ctime}. As can be seen in Figure \ref{ctime},
the computational time grows linearly as $n$. When $n$ increases from 2 to 50, the computational time required by DGS  increases from 0.00328 seconds to 0.00895 seconds, and for SIR,
it increases  from 0.00448 seconds to 0.0101 seconds. Therefore, both algorithms
have high computational efficiency, even for large $n$.

\begin{figure}[htp]
	\centering
	\includegraphics[width=13cm]{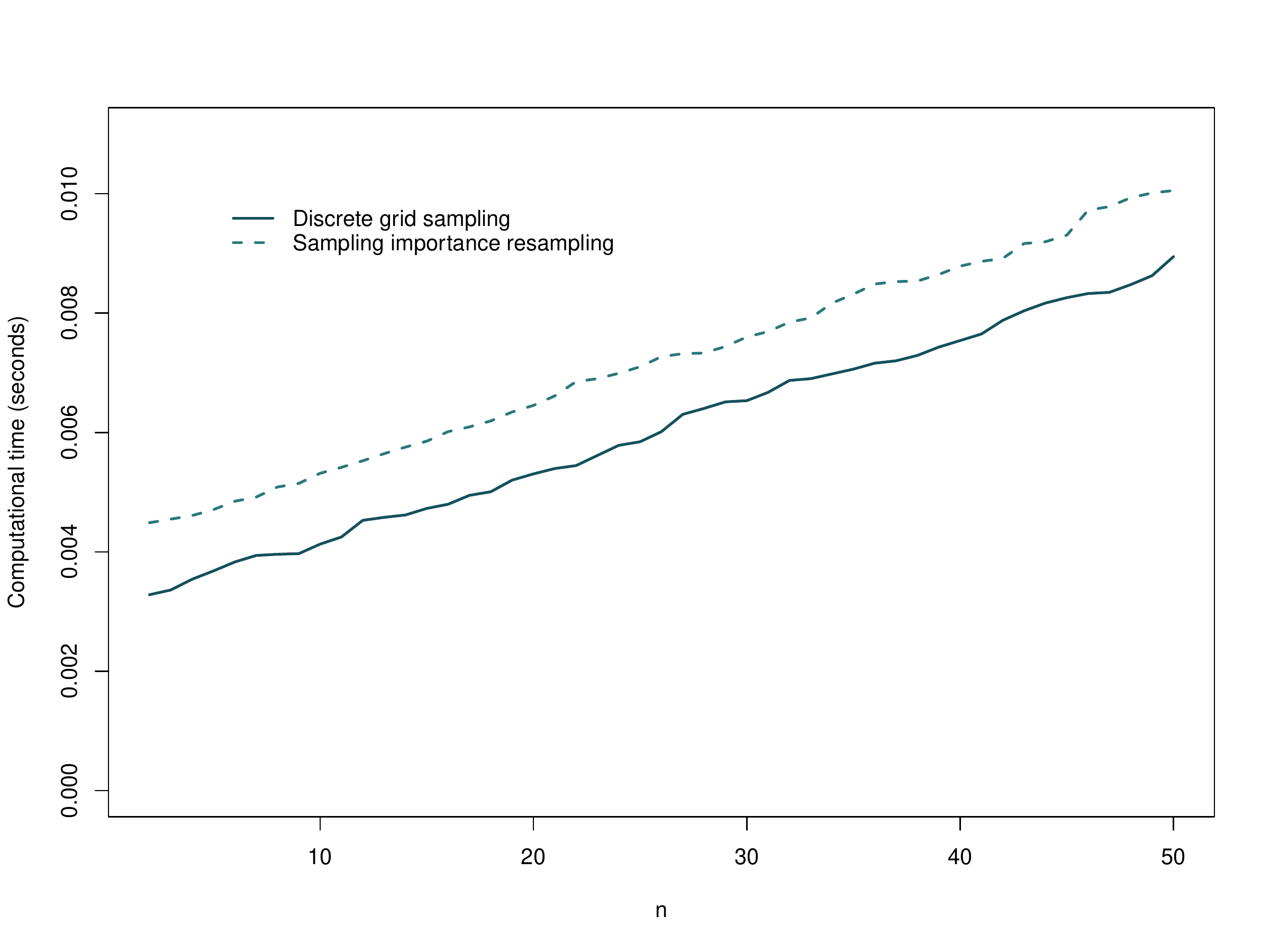}
	\caption{Computational time of two algorithms for sampling AGMG distribution with different dimensions.}\label{ctime}
\end{figure}

For a functional system, one may be interested in the indexes related to RUL, which also serves as the foundation for
prognostics and health management.  Assume that all the degradation values of the $i$-th system until time $t_m$ are	less than $\mathcal{C}$. The RUL of the  $i$-th system at time $t_m$ is defined as
$$Z_{it_m}=\inf\{z: \mathcal{Y}_i(z+t_m)\ge\mathbb{C}|Y_{im}<\mathbb{C}\}.$$
The CDF of  $Z_{i t_m}$ is
\begin{equation}
	\begin{aligned}
		F_{Z_{it_m}}(z|\alpha,\beta_i)&=P(Z_{it_m}< z)=P(\mathcal{Y}_i(z+t_m)>\mathbb{C})\\
		&=P(\mathcal{Y}_i(z)>\mathbb{C}-Y_{im})
		=\frac{\Psi(\beta(\mathbb{C}-Y_{im}),\alpha z)}{\Gamma(\alpha z)},
	\end{aligned}
\end{equation}
where the last two equalities hold because of the homogeneous property of the gamma process.
Due to the complicated form of $F_{Z_{it_m}}(z|\alpha,\beta_i)$,
we use a two-parameter Birnbaum-Saunders distribution to approximate
$F_{Z_{it_m}}(z|\alpha,\beta_i)$, which can greatly simplify the function form.
According to \cite{parpad2005},
$BS(\alpha_{im}^\ast,\beta_{im}^\ast)$ with CDF $\Phi\left(\frac{1}{\alpha_{im}^\ast}\left[\sqrt{\frac{z}{\beta_{im}^\ast}}-\sqrt{\frac{\beta_{im}^\ast}{z}}\right]\right)$
has a high approximation precision, where $\alpha_{im}^\ast=\sqrt{\frac{1}{\beta_i(\mathbb{C}-Y_{im})}}$ and
$\beta_{im}^\ast=\frac{\beta_i(\mathbb{C}-Y_{im})}{\alpha}$.
Then mean of $Z_{i t_m}$ can be approximated by
$$\mu_{im}(\alpha,\beta_i)=\beta_{im}^\ast\left(1+\left(\alpha_{im}^\ast\right)^2/2\right)=
\dfrac{1+2\beta_i(\mathcal{C}-Y_{im})}{2\alpha}.$$
The lower $\rho$-th quantile of the distribution of  $Z_{i t_m}$ can be approximated by
$$\mu^{\rho}_{im}(\alpha,\beta_i)=\frac{\beta_{im}^\ast}{4}\left[u_\rho\alpha_{im}^\ast +\sqrt{\left(u_\rho\alpha_{im}^\ast\right)^{2}+4}\right]^{2},$$
where $u_\rho$ is the $\rho$-th quantile of the standard normal distribution.
Bayesian point prediction of $RUL$ of the $i$-th system at time $t_m$:
\begin{equation}\label{point}
	\tilde{\mu}_{im}=\int_0^{\infty}\int_0^{\infty}\mu_{im}(\alpha,\beta_i)\pi(\alpha,\beta_i| \bm{y_{(m)}})\text{d}\alpha\text{d}\beta_i.
\end{equation}
Bayesian interval prediction of $RUL$ of the $i$-th system at time $t_m$ with $1-\rho$ credible level:
\begin{equation}\label{inter}
	\left(\tilde{\mu}_{im}^{\rho/2},\tilde{\mu}_{im}^{1-\rho/2}\right),
\end{equation}
where $\tilde{\mu}_{im}^{\rho}=\int_0^{\infty}\int_0^{\infty}\mu^{\rho}_{im}(\alpha,\beta_i)\pi(\alpha,\beta_i| \bm{y_{(m)}})\text{d}\alpha\text{d}\beta_i$.  Given the posterior samples
$\{(\alpha^{(k)},\beta_i^{(k)}), k=1,\dots,K\}$, \eqref{point} and
\eqref{inter} can be approximated by Monte Carlo methods:
\begin{equation}\label{mcest}
	\tilde{\mu}_{im}\approx\frac{1}{K}\sum_{k=1}^{K}\mu_{im}(\alpha^{(k)},\beta_i^{(k)}),~~
	\tilde{\mu}_{im}^{\rho}\approx\frac{1}{K}\sum_{k=1}^{K}\mu^{\rho}_{im}(\alpha^{(k)},\beta_i^{(k)}).	
\end{equation}

{\bf Remark:} DGS and SIR are proposed to produce posterior samples of
the model parameters $(\alpha,\beta_1,\dots,\beta_n)^{'}$. Based on the posterior samples, the RUL prediction for $n$ systems can be carried out by \eqref{mcest} simultaneously. The algorithms are flexible and can be used for single or multiple systems. When $n=1$, RUL is learned by the information from a single system, and the algorithms are reduced to these in Section \ref{sec:ps}.
When $n\ge 2$, this is a strategy for collaborative learning 
that makes use of the full information from multiple systems to estimate the common parameter $\alpha$, and the posteriors of heterogeneous parameters $\beta_i$s indirectly borrow the information from other systems to assist in improving the estimation accuracy.

\section{Online RUL prediction}
\label{sec:rul}
With the rise and popularization of advanced sensor technology, the performance degradation information of the system can be monitored in real-time, and RUL prediction will be updated after new observations are collected.
In this section, we will propose an online RUL prediction algorithm based on the gamma process. The proposed algorithm possesses several advantages for online updating, such as
high computational efficiency,  low requirement for data storage space, RUL prediction for multiple systems simultaneously, etc.

Assume that new degradation increments $(y_{1m+1},\dots,y_{nm+1})$ of $n$ systems are collected at time $t_{m+1}=(m+1)l$. Then the posterior distribution of $(\alpha,\beta_1,\dots,\beta_n)^{'}$ needs to be updated after new observations arriving.
For Bayesian inference with conjugate priors, 
a recursive formula can be used to implement the updating. From \eqref{post2},
we know that the posterior distribution of $(\alpha,\beta_1,\dots,\beta_n)^{'}$ at time
$t_{m+1}=(m+1)l$ is $AGMG_n\left(\bm\gamma_{p(m+1)},\omega_{p(m+1)},\bm\xi_{p(m+1)}\right)$,
where the parameters $\bm\gamma_{p(m+1)}$, $\omega_{p(m+1)}$ and $\bm\xi_{p(m+1)}$
can be updated recursively. That is,
\begin{equation}\label{post3}
	\begin{aligned}
		\bm\gamma_{p(m+1)}&=\bm\gamma_{p(m)}+(n,1)^{'}, \\
		\omega_{(m+1)}&= \omega_{(m)}^{\frac{mn+\delta_1}{(m+1)n+\delta_1}}\left[\prod_{i=1}^{m}y_{im+1}\right]^{\frac{1}{(m+1)n+\delta_1}},\\
		\bm\lambda_{(m+1)}&=\frac{m+\delta_2}{m+1+\delta_2}\bm\lambda_{(m)}+
		\frac{1}{m+1+\delta_2}\left(y_{1m+1},\dots,y_{nm+1}\right)^{'}.	
	\end{aligned}
\end{equation}
The recursive formula for posterior distribution can significantly reduce
data storage space, since only the values of parameters in posterior distribution and
new observations need to be recorded in \eqref{post3}. Besides, we have proposed two algorithms
with high computational efficiency to obtain estimates for AGMG distribution.
Thus, the online RUL prediction for multiple systems can be summarized as follows.

\begin{algorithm}
	\caption{Online RUL prediction}\label{algo4}
	\LinesNumbered
	\KwIn{Parameter values in posterior distribution of $(\alpha,\beta_1,\dots,\beta_n)^{'}$ at time $t_{m}=ml$, and new observations $(y_{1m+1},\dots,y_{nm+1})$ at time $t_{m+1}=(m+1)l$.\\}
	\KwOut{The point estimates and $100(1-\rho)\%$ credible intervals of RULs of $n$ systems.}
	\BlankLine
	Update  posterior distribution of $(\alpha,\beta_1,\dots,\beta_n)^{'}$ by \eqref{post3}.
	
	Generate random numbers from the updated posterior distribution by DGS or SIR.
	
	Calculate the point and $100(1-\rho)\%$ interval estimates of RULs of $n$ systems
	by 	\eqref{mcest}.
\end{algorithm}

%Algorithm \ref{algo4}

\section{Case study}
\label{sec:case}
\subsection{Laser degradation data}
The laser degradation data taken into account in Section \ref{sec:sim} have been reanalyzed.
As shown in Figure \ref{dat1}, the first, sixth and tenth laser devices have failed
at 4,000 hours, because the degradation values crossed the threshold level 10. The exact failure time of the three devices are unknown. However, we know that failure time lies in certain time intervals. For instance, the degradation value of the first device
crosses the threshold between 3750 and 4000 hours. Thus, the linear interpolation method can be used to estimate the failure time. The degradation values of the first device at time epochs
3750 and 4000 hours are 9.87 and 10.94, respectively. Using the linear interpolation method,
its failure time can be estimated by
$$3750+\dfrac{10-9.87}{10.94-9.87}\times (4000-3750)=3785.75~~ \text{hours}.$$
Similarly, the failure time of the sixth and tenth devices are
3506.75 and 3351.25 hours, respectively.

To illustrate the online algorithm, we start to predict the RULs of the three devices
at 500 hours, which means that  only two measurements  are utilized at the beginning.
Then the posterior distribution of the parameters is updated when new measurements
are involved. Algorithm \ref{algo4} is used to obtain the
estimates of the parameters,
the point and $95\%$ interval estimates of RULs of the three devices at each time epoch.
Figure \ref{paresti} shows the estimates of $\alpha$, $\beta_1$, $\beta_6$ and $\beta_{10}$ at each time epoch. From Figure \ref{paresti}, we can see that
the estimates of $\beta_1$, $\beta_6$ and $\beta_{10}$ have an increasing trend,
and their values have a significant difference. This implies that the devices are heterogeneous.
%While
%the estimates of $\alpha$ are relative stable after 1,750 hours, which lies in
%(0.022,0.030).
The point predictions and 95\% predictive intervals of RULs of the three devices are shown in
Figure \ref{rulpred}, in which we also display the true RULs of the three devices
at each time epoch. As can seen in Figure \ref{rulpred}, almost all the true RULs are covered by 95\% credible intervals. Furthermore, the point predictions are extremely close to the true RULs.

\begin{figure}[htp]
	\centering
	\includegraphics[width=13cm]{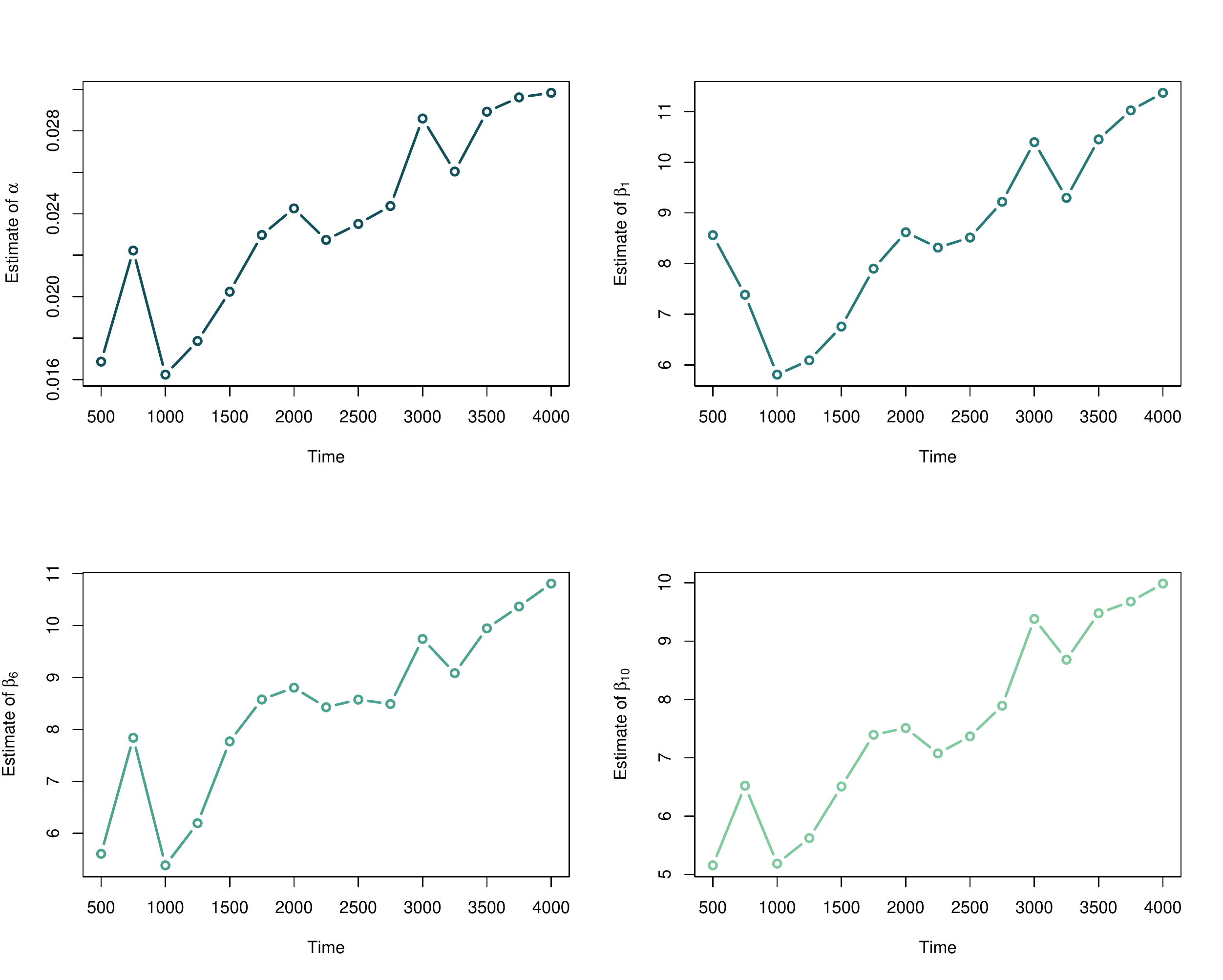}
	\caption{Online estimates of $\alpha$, $\beta_1$, $\beta_6$ and $\beta_{10}$ for laser degradation data.}\label{paresti}
\end{figure}

\begin{figure}[htp]
	\centering
	\includegraphics[width=15cm]{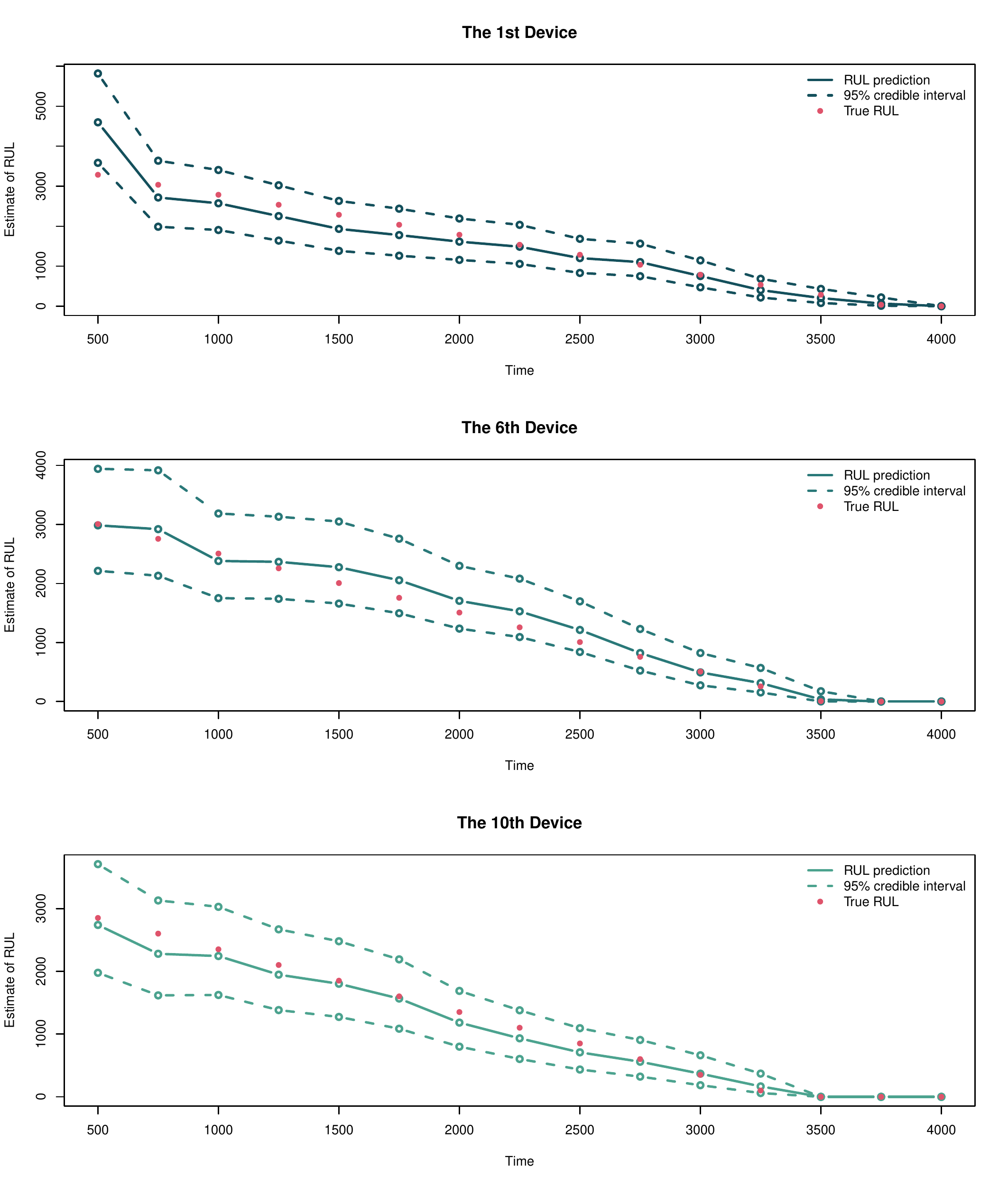}
	\caption{The point predictions and 95\% predictive intervals of RULs for the first, sixth and tenth devices.}\label{rulpred}
\end{figure}

\subsection{Train wheel data}
The train wheel data is from Almeida (2011, Table 5.1, p. 69).
The wheels will wear down with distance driven. When the wear of  wheel diameter
attains 60 (mm), the wheel is considered to have failed.
The original data set includes
14  specimens and the measurements are implemented equally spaced
from 50 to 600  in increments of 50, where the  unit of distance is 1000 km (kkm).
The main goal of this section is to predict the RULs of wheels after each measurement.
We remove the data of three specimens because the wear of their diameters crosses 60 mm very quickly.
Data of the rest 11 specimens are shown in Figure \ref{wheeldata}.  From Figure
\ref{wheeldata}, we see that the degradation paths are linear and increase monotonically. Three wheels
have failed  before 600 kkm. By the linear interpolation method, we compute the failure time of
the fifth, ninth and eleventh wheels:  523.537,  558.861 and 421.508 kkm, respectively.

We fit the data by gamma process with heterogeneous effects, and RULs of the three wheels are predicted online
from the second measurement. The results are shown in Figure \ref{wheelrul}, where
the true RULs, th point predictions, and 95\% predictive intervals of RULs of the three wheels
are reported. We can see in Figure \ref{wheelrul} that the RUL predictions are reasonable close to the true RULs,
and the 95\% credible intervals cover the true RULs  at all the measurement points. This
indicates that the uncertainty quantification of the proposed algorithm is satisfactory.

\begin{figure}[htp]
	\centering
	\includegraphics[width=13cm]{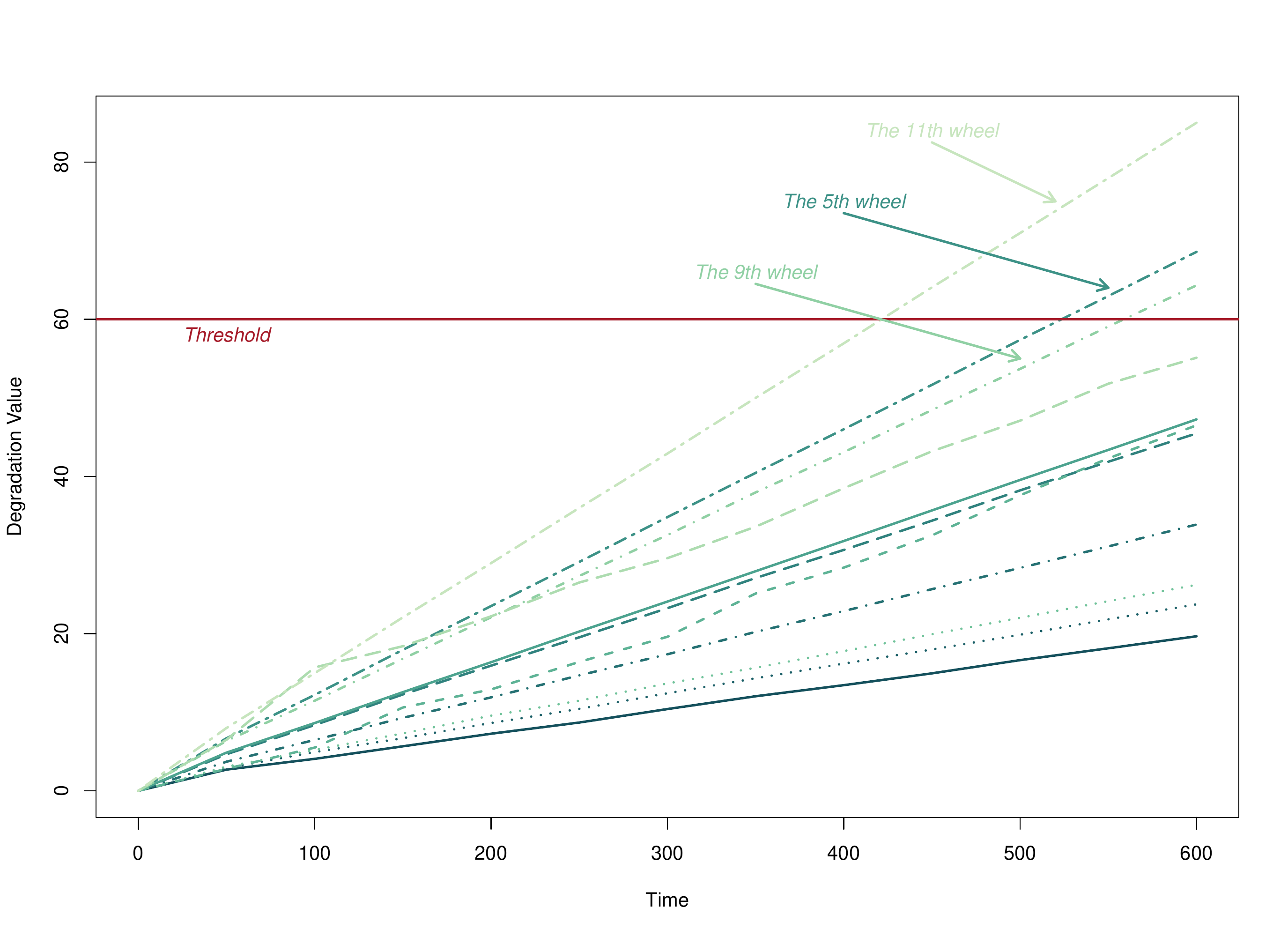}
	\caption{The train wheel data.}\label{wheeldata}
\end{figure}

\begin{figure}[htp]
	\centering
	\includegraphics[width=15cm]{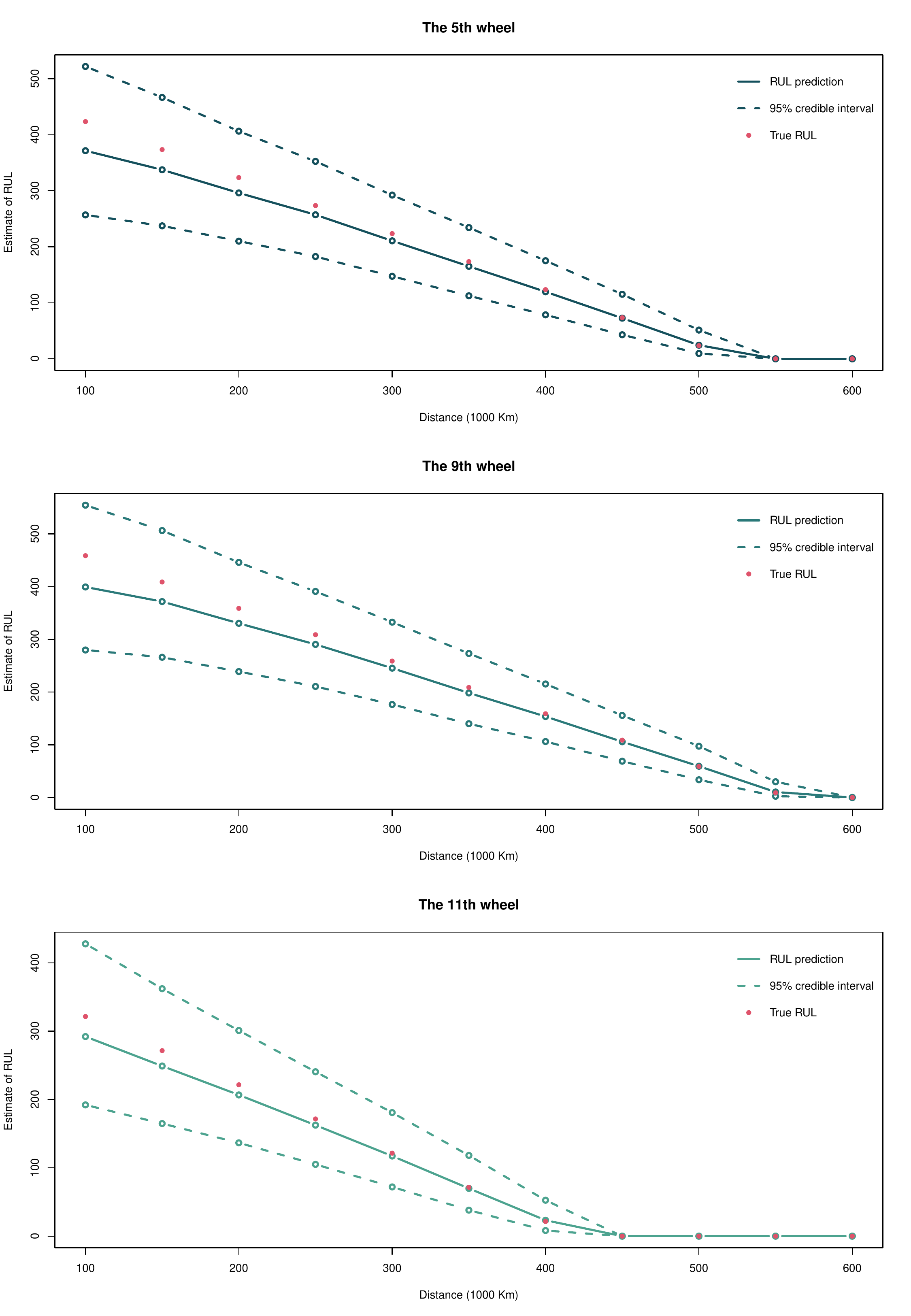}
	\caption{The point predictions and 95\% predictive intervals of RULs for the fifth, ninth and eleventh wheel.}\label{wheelrul}
\end{figure}

\section{Conclusion}
\label{sec:conc}
In this paper, we have derived a conjugate prior for the homogeneous gamma process, and
some properties of the prior distribution are studied in depth. Based on these properties,
three
algorithms (Gibbs sampling, DGS and SIR) are proposed to simulate random numbers from the posterior distribution. The generated samples can then be used to perform posterior inference. Simulation studies show that
DGS and SIR have both high computational efficiency and estimation accuracy.
The conjugate prior has been extended to the case of the gamma process with heterogeneous effects.
Similar algorithms can also be designed to generate posterior samples of the parameters.
The main advantage of a conjugate parameter structure is that the posterior distribution
can be easily updated recursively, which saves a lot of storage space and
has a high computational efficiency. With the recursive update of the posterior
distribution, an online algorithm is developed to predict the RUL of multiple systems
simultaneously. Finally,  two real-world examples have been used to validate  the proposed online algorithm, in which both point prediction and 95\% credible interval of RUL  can provide
reasonably accurate results.

\section*{Acknowledgment}
%We sincerely thank the editor, the AE and four reviewers for their insightful comments that have considerably improved the paper.
The research is supported by Natural Science Foundation of China (12171432, 11671303), the characteristic \& preponderant discipline of key construction universities  in Zhejiang province (Zhejiang Gongshang University- Statistics), and Collaborative Innovation Center of Statistical Data  Engineering Technology \& Application.

\appendix
%\section*{Appendix}		
\section{Proof of $\log\left(\frac{\prod_{j=1}^{m}t_j^{t_j/T_m}}{\overline{T}_m}\right)\ge 0$}\label{ap2}
	
Notice that $T_m=\sum_{j=1}^{m}t_j$, we have
\begin{equation*}
	\begin{aligned}
		\log\left(\frac{\prod_{j=1}^{m}t_j^{t_j/T_m}}{\overline{T}_m}\right)&=
		\sum_{j=1}^{m}\frac{t_j}{T_m}\log(t_j)-\log\left(\overline{T}_m\right)	
		=\frac{\sum_{j=1}^{m}t_j\log(t_j)}{\sum_{j=1}^{m}t_j}-\log\left(\frac{1}{m}\sum_{j=1}^{m}t_j\right)	
	\end{aligned}
\end{equation*}	
Let $q(x)=x\log(x)$. Then $q(x)$ is convex. Using Jensen's inequality, we have
$$\sum_{j=1}^{m}t_j\log(t_j)\ge m\cdot \frac{1}{m}\sum_{j=1}^{m}t_j\cdot \log\left(\frac{1}{m}\sum_{j=1}^{m}t_j\right)
=\left(\sum_{j=1}^{m}t_j\right)\cdot \log\left(\frac{1}{m}\sum_{j=1}^{m}t_j\right). $$
Thus, $\log\left(\frac{\prod_{j=1}^{m}t_j^{t_j/T_m}}{\overline{T}_m}\right)\ge 0$.

%\vspace{5mm}

\bibliographystyle{elsarticle-harv}
\bibliography{references}

\end{document}